\documentclass{article}

\usepackage{arxiv}

\usepackage{multirow}%
\usepackage{amsmath,amssymb,amsfonts}%
\usepackage{amsthm}%
\usepackage{mathrsfs}%
\usepackage[title]{appendix}%
\usepackage{xcolor}%
\usepackage{textcomp}%
\usepackage{manyfoot}%
\usepackage{booktabs}%
\usepackage{algorithm}%
\usepackage{algorithmicx}%
\usepackage{algpseudocode}%
\usepackage{listings}%
\usepackage{array}
\usepackage{enumerate}
\usepackage{verbatim}
\usepackage{epsfig}
\usepackage{hyperref}       
\usepackage{url} 

\usepackage[dvipsnames]{xcolor}

\hypersetup{
    colorlinks=true,
    linkcolor=Maroon,       
    citecolor=NavyBlue,     
    urlcolor=DarkOrchid     
}

\usepackage{appendix}

\usepackage{graphicx}
\usepackage{natbib}
\usepackage{booktabs}
\usepackage{mathbbol}
\usepackage{ulem}

\allowdisplaybreaks

\newcommand{\mb}{\mathbb}
\newcommand{\mc}{\mathcal}
\newcommand{\bx}{{\bf x}}
\newcommand{\bX}{{\bf X}}
\newcommand{\bS}{{\bf S}}

\newcommand{\bU}{{\bf U}}

\newcommand{\bV}{{\bf V}}
\newcommand{\ba}{{\bf a}}

\newcommand{\bY}{{\bf Y}}
\newcommand{\by}{{\bf y}}

\newcommand{\bA}{{\bf A}}

\newcommand{\bG}{{\bf G}}

\newcommand{\bD}{{\bf D}}

\newcommand{\bp}{{\bf p}}

\def\rT{\mathrm T}

\newcommand\independent{\protect\mathpalette{\protect\independenT}{\perp}}
\def\independenT#1#2{\mathrel{\rlap{$#1#2$}\mkern2mu{#1#2}}}

\newcommand{\beq}{\begin{equation}}
\newcommand{\eeq}{\end{equation}}
\newcommand{\beqn}{\begin{eqnarray}}
\newcommand{\eeqn}{\end{eqnarray}}
\newcommand{\bali}{\begin{align*}}
\newcommand{\eali}{\end{align*}}

\newcommand{\bUpi}{{\boldsymbol \pi}}

\newcommand{\bUGamma}{{\boldsymbol \Gamma}}

\newcommand{\bUS}{{\boldsymbol \Sigma}}

\newcommand{\eps}{{\varepsilon}}

\def\h2s{{\hspace{0.2in}}}
\def\b1s{{\hspace{-0.1in}}}

\theoremstyle{plain}%
\newtheorem{theorem}{Theorem}
\theoremstyle{remark}%
\newtheorem{remark}{Remark}%

\newtheorem{lemma}{Lemma}
\newtheorem{coro}{Corollary}

\theoremstyle{definition}%

\title{Evaluating Treatment Effects using Group Testing with Retesting of Positive Groups}

\author{ \href{https://orcid.org/0009-0006-0668-1304}{\includegraphics[scale=0.06]{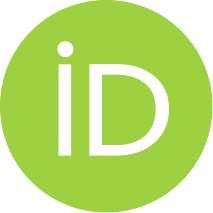}}\hspace{1mm}Aye Aye~Maung \\
	Department of Bioinformatics and Biostatistics\\
	University of Louisville\\
	Louisville, KY, USA 40202 \\
	\texttt{ayeaye.maung@louisville.edu} \\
	\And
	Qi~Zheng \\
	Department of Bioinformatics and Biostatistics\\
	University of Louisville\\
	Louisville, KY, USA 40202 \\
	\texttt{qi.zheng@louisville.edu} \\
}

\renewcommand{\shorttitle}{Evaluating Treatment Effects using Group Testing with Retesting of Positive Groups}

\hypersetup{
pdftitle={Evaluating Treatment Effects using Group Testing with Retesting of Positive Groups},
pdfauthor={Aye Aye~Maung, Qi~Zheng},
pdfkeywords={Pooled screening, Observational studies, Inverse probability weighting, Confounding adjustment, Pseudo-likelihood estimation},
}

\begin{document}

\maketitle


\begin{abstract}
Group testing is an established, highly cost-effective strategy for population-level disease surveillance that identifies positive individuals by pooling biological specimens. Originally introduced during World War II for large-scale screening and heavily utilized in modern high-throughput public health infrastructure, traditional group testing methods are restricted to purely associational analyses. Consequently, they lack the capacity to infer the direct causal effect of an intervention when individual-level data are subject to baseline confounding. In this work, we bridge this fundamental gap by introducing a causal inference framework tailored specifically for group testing designs. We integrate the principles of inverse probability weighting (IPW) directly into a pooled pseudo-likelihood formulation to construct an unbiased pseudo-score function. Under standard regularity conditions, we prove the consistency and asymptotic normality of our proposed plug-in estimator. Extensive numerical simulations demonstrate that our framework successfully purges severe selection bias, accurately recovering the true average treatment effect where traditional unweighted pooling models fail. Finally, we illustrate the practical utility of our method as both an estimation and a diagnostic tool using real-world observational surveillance data from the CDC's U.S. Influenza Vaccine Effectiveness Network.
\end{abstract}

%

\keywords{
Pooled screening \and Observational studies \and Inverse probability weighting \and Confounding adjustment \and Pseudo-likelihood estimation
}

\section{Introduction}
\label{sec1}

Group testing, initially proposed by \citet{dorfman1943detection} to screen for syphilis among World War II military recruits, has become a powerful technique to screen for various infectious diseases, including Chlamydia trachomatis and Neisseria gonorrhoeae \citep{lindan2005utility}, influenza \citep{van2012pooling}, herpes \citep{hill2016efficient}, malaria \citep{hsiang2012surveillance}, and COVID-19 \citep{brault2021group}. By pooling samples, group testing significantly expands testing capacity, allowing health departments to evaluate large populations with fewer diagnostic kits \citep{10.3389/fpubh.2021.583377}. This advantage is critical during the early stages of a pandemic when massive gaps exist between sharp increases in testing demand and supply shortages. Consequently, pooling has been widely implemented across multiple countries, including Australia \citep{chong2020sample}, Israel \citep{yelin2020evaluation}, South Korea \citep{kim2020pooling}, and the U.S. \citep{abdalhamid2020assessment}.

Existing statistical research in group testing mainly focuses on two primary areas: (1) improving diagnostic accuracy and optimizing cost-saving designs \citep[e.g.,][]{kim2007comparison, bilder2010optimal, tebbs2013nonparametric}, and (2) identifying subject-level features through regression modeling \citep[e.g.,][]{wang2013semi, lin2019regression}. However, a significant methodological gap exists: it remains unknown whether population-level causal treatment effects can be accurately assessed using group testing data. This gap is highly consequential because group testing is most valuable during large-scale outbreaks, precisely when public health interventions—such as mandates or new treatments—are concurrently implemented. Quantifying the causal effects of these measures is essential to avoid confounding bias and optimize resource allocation \citep{Glass2013-ns, Al-Kassab-Cordova2025-mp}, making the development of statistical methods for evaluating intervention effects from pooled data a critical area for pandemic preparedness.

Evaluating causal effects from group testing data is exceptionally challenging due to the inherent loss of individual-level information. Accurate causal estimation requires adjusting for individual-level confounding variables (e.g., age, health status), yet standard statistical tools like inverse probability weighting (IPW) fail ``off the shelf" because pooling severs the essential link between an individual's outcome and their specific confounders. Overcoming this information loss requires custom-built models to handle the complex, non-linear relationships induced by the pooling architecture. 

This paper bridges this methodological gap by introducing a unified, likelihood-based framework for estimating the average treatment effect (ATE) from group testing data. Our approach handles data from both randomized clinical trials and observational studies. For randomized data, we derive the exact likelihood. For observational data, we develop a novel pseudo-likelihood function to integrate IPWs directly into the group testing likelihood structure. We provide rigorous theoretical justification, proving that the resulting ATE estimator is consistent and asymptotically normal under standard regularity conditions. Finally, we demonstrate the robustness of our method through extensive simulation studies and a case study on real-life influenza vaccination data, establishing that group testing can serve as a robust framework for correctly assessing public health interventions.

The remainder of this paper is organized as follows. Section~\ref{GT_Data} outlines the mathematical notation of the pooled testing structure. Section~\ref{GT_RCM} establishes the causal framework under the potential outcomes setup. Section~\ref{GT_PseudoLL} introduces our weighted pseudo-likelihood estimator and its corresponding asymptotic properties. Section~\ref{simulation_set} and Section~\ref{simulation_res} detail the numerical simulation configurations and performance results. Section~\ref{casestudy} presents the empirical case study application, and Section~\ref{discussion} concludes with a brief discussion.
	
	\section{Methodology} \label{methodology}
	\subsection{Group Testing Data} \label{GT_Data}
	Let $\bX\in \mc{X}$ and $A\in \mc{A}$ denote the $p$-dimensional vector of baseline covariates ($p \in \mb{N}$) and the categorical treatment received by the subject, respectively.
	The support spaces $\mc{X}$ and $\mc{A}$ represent a compact set in $\mb{R}^{p}$ and a finite set, respectively.  Let $Y \in \mc{Y} = \{0, 1\}$ denote the latent, binary true infection status of the subject.

    In this paper, we specifically deal with the group testing framework which incorporates retesting of individuals in positive groups. We aggregate $N \in \mb{N}$ independent individuals into a single pool. The vector of individual true outcomes, $\bY =  \{Y_j\}_{j = 1}^N \in \mc{Y}^N$, is unobserved. If the diagnostic assay were perfect, we would observe the true aggregated pool status, $Y^{*}=\max_{j= 1}^N Y_{j}$. However, due to potential diagnostic misclassification, we instead observe an imperfect binary pool test result $\widetilde{Y}^{\ast} \in \mc{Y}$.  It is worth noting that in this paper, we use a tilde ($\widetilde{\cdot}$)  to denote proxy or estimated values, such as statuses observed from imperfect testing kits or parameter estimates. To incorporate individual retesting when a pool tests positive, every individual within the group is retested if and only if $\widetilde{Y}^{\ast} = 1$. The resulting individual test outcomes are recorded in the vector $\widetilde{\bY} = \{\widetilde{Y}_j\}_{j = 1}^N \in \mc{Y}^N$.

    Let $S_e = P(\widetilde{Y}^{\ast}=1 \mid Y^{*}=1)$ and $S_p = P(\widetilde{Y}^{\ast}=0 \mid Y^{*}=0)$ denote the sensitivity and specificity of the pool-level assay, respectively. We assume that individual-level diagnostic accuracy matches the pool-level parameters, such that $P(\widetilde{Y}_j = 1 \mid Y_j = 1) = S_e$ and $P(\widetilde{Y}_j = 0 \mid Y_j = 0) = S_p$ for all $j$. That is, we assume there is no dilution effect of pooled testing. This is a reasonable assumption for low to moderate pool sizes \citep{Bateman2020-zs}  or for novel pooling procedures that mitigate sample dilution \citep{Christoff2021-zc}.
    The complete outcome of this multi-stage group test-retest process can be compactly represented by a single random variable
    \begin{align*}
    \widetilde{Z} = \left\{\begin{array}{cc} 
    0 &\text{if } \widetilde{Y}^{\ast} = 0 \\ 
    \widetilde{\bY}  &\text{if } \widetilde{Y}^{\ast} = 1\end{array}
    \right.
    \end{align*}defined on the support space $\mc{Z} = \{0\} \bigcup \mc{Y}^N$.

    Consequently, the observed data vector for a given pool is denoted by $D = (\widetilde{Z}, \bA, \mb{X})$, where $\mb{X}=(\bX_{1}^{\rT},\cdots,\bX_{N}^{\rT})^{\rT}$ represents the vector of individual covariates and $\bA=(A_{1},\cdots, A_{N})^{\rT}$ represents the vector of individual treatments within the pool. The observed dataset, $\bD = \{D_i\}_{i=1}^{M}$, consists of $M \in \mb{N}$ independent and identically distributed (i.i.d.) replicates of $D$. The total number of subjects in the study is  $M_{t}=\sum_{i=1}^{M}N_i$. 
    
	
	\subsection{Treatment Evaluation Framework for Group Tests} \label{GT_RCM}

	Group testing is primarily deployed for rapid population surveillance and disease screening. Consequently, group testing studies are fundamentally observational, meaning that the random assignment of subjects to treatment groups is rarely guaranteed. Isolating subject-level treatment effects from such data requires formal adjustments for baseline confounders to eliminate selection bias \citep{10.1093/aje/kwn010, Igelstrom960}.
    
To achieve identifiability of causal treatment effects, we employ the Rubin Causal Model (RCM) framework \citep{hernan2020causal}. At the individual level, let $Y^{(a)}$ denote the potential binary infection outcome of a subject, had they received treatment level $a \in \mc{A}$. We assume these potential outcomes satisfy the standard structural axioms:
	\begin{enumerate}[({RCM}1)]
		\item \label{RCM1} \textit{Consistency}: The observed outcome, under each treatment level, equals the potential outcome corresponding to that level: $Y = \sum_{a \in \mc{A}} 1\{A = a\} Y^{(a)}$
		\item \label{RCM2} \textit{Conditional Exchangeability}: The potential outcomes are independent of treatment conditional on the covariates: $\{Y^{(a)}\}_{a \in \mc{A}} \independent A \ | \ \bX$.
	\end{enumerate}

    To translate this framework to the group-testing structure, let $\widetilde{Z}^{(\ba)}$ define the potential outcome of the joint test-retest process for a pool, had the constituent individuals received the treatment vector $\ba = \{a_j\}_{j = 1}^N \in \mc{A}^N$. We stipulate the group-level potential outcomes to satisfy analogous consistency and exchangeability conditions: $\widetilde{Z} = \sum_{\ba \in \mc{A}^N} 1\{\bA = \ba\}\widetilde{Z}^{(\ba)}$ and $\{\widetilde{Z}^{(\ba)}\}_{\ba \in \mc{A}^N} \independent \bA \ | \ \mb{X}$.
	

    Without loss of generality, we restrict our focus to a binary treatment space $\mc{A} = \{0, 1\}$, where $0$ and $1$ denote the control and active treatment conditions, respectively. Because our core public health objective is to evaluate the population-level efficacy of an intervention on reducing overall infection rates, we target the average treatment effect (ATE), defined as $\tau = E[Y^{(1)}] - E[Y^{(0)}]$. Let $p_a^{*} = E[Y^{(a)}]$ for $a \in \{0, 1\}$ represent the marginal potential outcome probabilities, and denote the parameter vector of interest as $\mathbf{p}^{*} = (p_0^{*}, p_1^{*})^{\rT}$.
    
    Because our core objective is the consistent estimation of the  parameter vector $\mathbf{p}^{*}$ rather than individual clinical diagnosis or updating misclassification rates dynamically \citep{gastwirth1994screening, krajden2014pooled, mcmahan2017bayesian}, we assume that the sensitivity ($S_{e}$) and specificity ($S_{p}$) of the assay are known or pre-estimated. Our framework is oriented strictly towards maximizing statistical efficiency during parameter estimation.

\subsection{Pseudo-Likelihood for Estimating Treatment Effects} \label{GT_PseudoLL}

We first consider the estimation of the ATE using group testing data collected from randomized controlled trials (RCT). Randomized controlled trials are widely regarded as the gold standard for evaluating causal treatment effects because subjects are randomly assigned to different treatment arms. Consequently, the treatment $A$ is independent of all baseline covariates, whether measured or unmeasured. This balances the distribution of confounders across treatment groups, meaning that unconditional exchangeability applies: $Y^{(a)} \independent A$.

Under the consistency condition and this stronger form of exchangeability condition,
\begin{align*}
    E\left[Y|A=a\right]=E\left[Y^{(a)}|A=a\right]=E[Y^{(a)}]=p_a^{*}.
\end{align*}
This relationship implies that if individual-level binary outcomes were directly available, the marginal potential outcome probabilities $p^{*}_{a}$ ($a=0,1$) could be consistently estimated by the conditional sample average within each treatment arm.

Since we lack direct access to individual outcomes, in a group testing framework, we must construct a likelihood function based entirely on the observed pool-level data $(\widetilde{Z}, \bA)$ and the parameters. To this end, let $\bp^{*}=(p_0^{*}, p_1^{*})^{\rT}$ be the true parameters and let $\bp=(p_0, p_1)^{\rT}$ be parameter variables. Let $\ba$ be a realized treatment vector of the pool of the $N$ subjects. By strong exchangeability, the distribution of the potential pooled outcome $\widetilde{Z}^{(\ba)}$ and the conditional distribution of the observed pooled outcome $\widetilde{Z}$ given $\ba$ coincide: $\pi_{\widetilde{Z}^{(\ba)}}(\tilde{z}; \bp) = \pi_{\widetilde{Z} | \bA = \ba}(\tilde{z}; \bp)$. Hence the function
\begin{align}
    \ell(\bp| \widetilde{Z}, \bA) := &\log \pi_{\widetilde{Z}^{(\ba)}}(\widetilde{Z}; \bp) \label{eq:individual_po_ll} \\
= &\log \pi_{\widetilde{Z} | \bA = \ba}(\widetilde{Z}; \bp)\label{individual_cond_ll}
\end{align} constitutes a valid log-likelihood of $(\widetilde{Z}, \bA)$. So, based on the data $\bD^{(0)}:= \{(\widetilde{Z}_i, \bA_i)\}_{i = 1}^M$, the log-likelihood is
\begin{align}
    &\ell(\bp|\bD^{(0)}) := \frac{1}{M}\sum_{i=1}^{M} \ell(\bp| \widetilde{Z}_i, \bA_i) 
    = \frac{1}{M}\sum_{i = 1}^M \sum_{\ba_i \in \mc{A}^{N_i}} 1\{\bA_i = \ba_i\} \log \pi_{\widetilde{Z}_i^{(\ba_i)}}(\widetilde{Z}_i; \bp) \label{eq:original_ll} \\
    = &\frac{1}{M}\sum_{i = 1}^M \sum_{\ba_i \in \mc{A}^{N_i}} 1\{\bA_i = \ba_i\} \log \pi_{\widetilde{Z}_i | \bA_i = \ba_i}(\widetilde{Z}_i; \bp). \label{RCT_ll}
\end{align}

In Appendix \ref{rct_likelihood}, we expand the log-likelihood \eqref{RCT_ll} fully in terms of the observed data and parameters. There, $\widetilde{\eta}_0(\ba, \bp)$ denotes
the algebraic form, under RCT conditions, of $P[\widetilde{Y}^{*}=0|\bA=\ba; \bp]$, the conditional probability of observing a negative group test result. Similarly, $
\widetilde{\phi}_1(\widetilde{\by}, \ba; \bp)$ denotes the algebraic form, under RCT conditions, of $P[\widetilde{Y}^{*}=1, \widetilde{\bY} = \widetilde{\by}|\bA = \ba; \bp]$, the conditional probability of observing a positive group test followed by individual-specific retest outcomes $\widetilde{\by}$.  So the RCT log-likelihood takes the form
\begin{align}
    \ell(\bp|\bD^{(0)}) = \frac{1}{M}\sum_{i = 1}^M \Big\{ (1-\widetilde{Y}^{*}_{i}) \log \widetilde{\eta}_0(\bA_i;\bp) +\widetilde{Y}^{*}_i \log\widetilde{\phi}_1(\widetilde{\bY}_i, \bA_i; \bp) \Big\}. \label{RCT_ll_expanded}
\end{align} Note that for RCTs, there is no need to incorporate baseline covariates into the log-likelihood. Simply optimizing \eqref{RCT_ll_expanded} provides valid, consistent estimates for $\bp^\ast$.

In experimental designs, investigators actively control and randomize treatment assignments to isolate causal effects; in observational studies, however, participants are evaluated without any experimental intervention. The overarching objective of these observational studies is to investigate the natural distribution and progression of risk factors, exposures, and epidemiological outcomes \citep{Gilmartin-Thomas2018-tk}. Within group testing settings, observational frameworks are often essential because investigators cannot control treatment assignments. Instead, data are typically inherited from routine, population-level screening databases where sample pooling and individual exposures are determined by real-world logistics rather than experimental protocols \citep{lindan2005utility, van2012pooling}. Furthermore, ethical constraints frequently prohibit the active assignment of harmful exposures, such as smoking status or environmental toxins \citep{Goldstein2018-ih}. In addition to bypassing these ethical hurdles, observational studies are typically faster and more cost-effective to implement than large-scale randomized trials \citep{Faraoni2016-fx}. Driven by modern advancements in high-throughput data collection, these observational datasets have become widely available, offering a powerful avenue for estimating the Average Treatment Effect (ATE) from routine screening databases. Consequently, the vast majority of population-level group testing surveillance data remains observational in nature \citep[see, e.g.,][]{lindan2005utility, van2012pooling, hsiang2012surveillance}.

In an observational study, the underlying treatment assignment mechanism is typically unknown and often depends on subject-specific characteristics or clinical preferences. Consequently, baseline covariates may be highly unbalanced across treatment arms, meaning that differences in observed infection rates capture a mixture of true treatment effects and confounding bias. Under the conditional exchangeability assumption outlined in Section~\ref{GT_RCM}, we have
\begin{align*}
    E[E[Y|A = a, \bX]] = E[Y^{(a)}] = p_a^\ast
\end{align*} implying that $p_a^\ast$ can no longer be consistently estimated by the unadjusted conditional sample averages, as $E[Y |A = a] \neq p_a^\ast$ in the presence of confounding. As a result, $\pi_{\widetilde{Z}^{(\ba)}}(\tilde{z}; \bp)$, $\pi_{\widetilde{Z} | \bA = \ba}(\tilde{z}; \bp)$ diverge and the specific equation lines \eqref{individual_cond_ll} and \eqref{RCT_ll} fail.

Furthermore, the algebraic forms $\widetilde{\eta}_0(\ba;\bp)$ and $\widetilde{\phi}_1(\widetilde{\by}, \ba; \bp)$, originally derived from conditional outcome distributions under RCT conditions in Appendix \ref{rct_likelihood}, have modified interpretations in observational studies. As shown in Appendix \ref{os_derive}, they are now merely related to the potential outcome distributions via  
\begin{align*}
    \pi_{\widetilde{Z}^{(\ba)}}(\tilde{z}; \bp) = \Big[\widetilde{\eta}_0(\ba;\bp)\Big]^{1-\widetilde{y}^{*}}\Big[\widetilde{\phi}_1(\widetilde{\by}, \ba; \bp)\Big]^{\widetilde{y}^{*}}.
\end{align*} This alone is insufficient to make the function in Equation \eqref{RCT_ll_expanded} a true log-likelihood of the data. Indeed, equation line \eqref{eq:original_ll} shows that the function evaluates each observed pooled outcome $\widetilde{Z}_i$ under the probability density of the potential outcome variable $\widetilde{Z}_i^{(\ba_i)}$. And, unlike the situation in RCTs, since the conditional outcome distributions systematically diverge from the potential outcome distributions in the presence of confounding, optimizing the function yield consistently biased parameters.

To overcome this systemic bias, we leverage the seminal propensity score framework introduced by \citet{rosenbaum1983central}. By defining the conditional probability of treatment assignment given observed baseline covariates, propensity scores serve as a powerful balancing mechanism in observational research. Conditioning on the propensity score aligns the distribution of measured baseline covariates across treatment arms, ensuring that any remaining differences in the outcome distributions across treatments are uniquely due to the true causal effect \citep{hernan2020causal}. While various propensity score architectures exist, including matching, stratification, and direct covariate adjustment, this paper establishes a novel integration of inverse probability weighting (IPW) directly within the group testing framework.

To this end, the pool-level inverse probability weight for a given group data vector $D = (\widetilde{Z}, \bA, \mb{X})$ is formulated as
\begin{align}
W(\bA, \mb{X})= \sum_{\ba\in\mathcal{A}^{N} } \frac{1\{\bA=\ba\} }{\pi_{\bA}(\ba|\mb{X}) } 
= \prod_{j=1}^{N}\left(
\sum_{a=0}^{1} \frac{ 1\{A_{j}=a\} }{\pi_{a}^{\ast } (\bX_{j})}\right) \label{GT_IPW}
\end{align} where $\pi_{a}^{\ast}(\bx)=P(A=a|\bX=\bx)$ represents the individual-level propensity score. By weighting the contribution of each independent pool in \eqref{eq:original_ll}, we construct a novel pseudo-log-likelihood function that incorporates the baseline covariates:
\begin{align}
    &\ell(\bp|\bD) := \frac{1}{M} \sum_{i = 1}^M W(\bA_i, \mb{X}_i) \ell(\bp|\widetilde{Z}_i, \bA_i) \nonumber \\  = &\frac{1}{M} \sum_{i = 1}^M W(\bA_i, \mb{X}_i) \sum_{\ba_i \in \mc{A}^{N_i}}  1\{\bA_i = \ba_i\} \log \pi_{\widetilde{Z}_i^{(\ba_i)}}(\widetilde{Z}_i; \bp) \nonumber \\
    = &\frac{1}{M}\sum_{i = 1}^M W(\bA_i, \mb{X}_i) \Big\{ (1-\widetilde{Y}^{*}_{i}) \log \widetilde{\eta}_0(\bA_i;\bp) +\widetilde{Y}^{*}_i \log\widetilde{\phi}_1(\widetilde{\bY}_i, \bA_i; \bp) \Big\}. \label{eq:pseudo_ll}
\end{align} 

Although \eqref{eq:pseudo_ll} is not the exact log-likelihood for the complete observational data $\bD$, we prove that it preserves its essential properties; namely, setting the gradient of the pseudo-log-likelihood to zero yields unbiased estimating equations for the true parameter vector $\bp^\ast$. Let $\bU(\bp) := \bU(\bp; \bD) := \nabla \ell(\bp|\bD)$ denote the pseudo-score function. Under the technical regularity conditions detailed in Appendix \ref{theory}, we establish the following asymptotic properties for the pseudo-maximum likelihood estimator:\begin{theorem}\label{thm1}There exists a sequence of solutions $\tilde{\bp}$ to the pseudo-score equations $\bU(\bp)=0$ such that: (i) $\widetilde{\bp} \xrightarrow{p} \bp^{*}$;(ii) $\sqrt{M}\left(\widetilde{\bp}-\bp^{*}\right) \xrightarrow{d} N(0, \bV \bUS \bV^{\rT})$, where $\bV = -\{E[ \dot{\bU}(\bp^{*})]\}^{-1}$ and $\bUS = M E[\bU(\bp^{*})\bU(\bp^{*})^{\rT}]$.\end{theorem}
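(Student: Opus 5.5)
The proof follows the standard route for Z-estimators built from i.i.d. estimating functions. Write $\bU(\bp)=M^{-1}\sum_{i=1}^M \bU_i(\bp)$, where
\[
\bU_i(\bp)=W(\bA_i,\mb{X}_i)\,\nabla_{\bp}\Big\{(1-\widetilde{Y}^{*}_{i})\log\widetilde{\eta}_0(\bA_i;\bp)+\widetilde{Y}^{*}_i\log\widetilde{\phi}_1(\widetilde{\bY}_i,\bA_i;\bp)\Big\}
\]
are i.i.d. across pools. The propensity scores $\pi^\ast_a$ are treated as known, as in \eqref{GT_IPW}. Given this, only three ingredients are needed: (a) unbiasedness, $E[\bU_i(\bp^\ast)]=0$; (b) a uniform law of large numbers for $\bU$ and $\dot{\bU}$ on a neighbourhood of $\bp^\ast$; and (c) nonsingularity of $E[\dot{\bU}(\bp^\ast)]$. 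These, together with the Appendix \ref{theory} regularity conditions (positivity $\pi^\ast_a(\bx)\ge c>0$ on compact $\mc{X}$, $\bp^\ast$ interior to $(0,1)^2$, $S_e+S_p>1$, bounded pool size), yield (i) and (ii).

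The key step, and the one I expect to carry the real content, is (a). Condition on $\mb{X}$ and use $W=\sum_{\ba}1\{\bA=\ba\}/\pi_{\bA}(\ba|\mb{X})$ together with group-level consistency to replace $\widetilde{Z}$ by $\widetilde{Z}^{(\ba)}$ on $\{\bA=\ba\}$. Then
\[
E[\bU_i(\bp^\ast)\mid\mb{X}]=\sum_{\ba}E\Big[\tfrac{1\{\bA=\ba\}}{\pi_{\bA}(\ba|\mb{X})}\,\nabla\log\pi_{\widetilde{Z}^{(\ba)}}(\widetilde{Z}^{(\ba)};\bp^\ast)\,\Big|\,\mb{X}\Big].
\]
By group-level conditional exchangeability, the indicator factors out with conditional expectation $\pi_{\bA}(\ba|\mb{X})$, which cancels the weight. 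Taking the outer expectation over $\mb{X}$ gives $\sum_{\ba}E[\nabla\log\pi_{\widetilde{Z}^{(\ba)}}(\widetilde{Z}^{(\ba)};\bp^\ast)]$. This vanishes term by term because $\pi_{\widetilde{Z}^{(\ba)}}(\cdot;\bp)$ is the marginal law of $\widetilde{Z}^{(\ba)}$ (Appendix \ref{os_derive}), so each term is the expectation of a genuine score at the truth. Here we differentiate $\sum_{\tilde z}\pi_{\widetilde{Z}^{(\ba)}}(\tilde z;\bp)=1$ over the finite support $\mc{Z}$, and the interchange is trivial because $\mc{Z}$ and $\mc{A}^N$ are finite. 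The same argument, applied to the second derivative, gives the weighted information identity $E[\dot{\bU}_i(\bp^\ast)]=-\sum_{\ba}\mathcal{I}_{\ba}(\bp^\ast)$, where $\mathcal{I}_{\ba}$ is the Fisher information of $\widetilde{Z}^{(\ba)}$. This is negative definite as long as both treatment levels occur with positive probability in some pool and the model is identifiable (ensured by $S_e+S_p>1$). That establishes (c).

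For (b), note that $\widetilde{\eta}_0$ and $\widetilde{\phi}_1$ are polynomials in $\bp$ that are bounded away from zero on a compact neighbourhood $\mc{N}$ of $\bp^\ast$ in $(0,1)^2$. Hence $\log$ and its first three derivatives are uniformly bounded there, and $W\le c^{-N}$. The uniform LLN (dominated, continuous integrand) therefore gives $\sup_{\mc{N}}\|\dot{\bU}(\bp)-E\dot{\bU}_i(\bp)\|\to0$ in probability.

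For (i), I would use the local existence argument in the style of Foutz or Aitchison–Silvey. By the uniform convergence and negative definiteness, with probability tending to one the pseudo-log-likelihood \eqref{eq:pseudo_ll} is strictly concave on a small ball $B_\delta(\bp^\ast)$ and exceeds its boundary values at $\bp^\ast$, because $\bU(\bp^\ast)\to0$ in probability. So it has a local maximizer $\widetilde{\bp}$ in the interior solving $\bU(\widetilde{\bp})=0$, and letting $\delta\downarrow0$ gives consistency.

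For (ii), take a mean-value expansion $0=\bU(\widetilde{\bp})=\bU(\bp^\ast)+\dot{\bU}(\bar{\bp})(\widetilde{\bp}-\bp^\ast)$ row by row. Consistency and the uniform LLN give $\dot{\bU}(\bar{\bp})\to E[\dot{\bU}(\bp^\ast)]=-\bV^{-1}$. The multivariate CLT applied to the bounded, mean-zero i.i.d. $\bU_i(\bp^\ast)$ gives $\sqrt{M}\,\bU(\bp^\ast)\to N(0,\bUS)$, where $\bUS=\mathrm{Var}(\bU_i(\bp^\ast))=ME[\bU\bU^{\rT}]$. Slutsky's theorem then yields $\sqrt{M}(\widetilde{\bp}-\bp^\ast)\to N(0,\bV\bUS\bV^{\rT})$. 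The sandwich form is genuine, since the weighting breaks the information identity between $\bUS$ and $\bV^{-1}$.
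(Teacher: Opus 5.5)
Your proposal is correct and follows essentially the same route as the paper. The shared steps are: unbiasedness of the weighted score because the propensity weights cancel (Lemma 3); negative definiteness of $E[\dot{\bU}(\bp^{*})]$ via the weighted information identity (Lemma 4); a uniform law of large numbers, where the paper uses Donsker classes but your bounded-continuous-integrand argument suffices; an Aitchison--Silvey/Crowder-type local existence argument, which you phrase through strict concavity of the pseudo-log-likelihood rather than Crowder's Theorem 3.2; and a mean-value expansion with the CLT and Slutsky. The one substantive difference is that you derive full rank from $S_e+S_p>1$ instead of assuming Condition (A4). That derivation is valid, because after the weights cancel, the information is an unweighted sum over all $\ba\in\mc{A}^N$, and the all-control and all-treated configurations already give nonzero gradients along the two coordinate axes. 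Note, however, that $S_e+S_p>1$ is not among the paper's stated conditions.
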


Note that despite its formal definition, the matrix $\bUS$ is functionally independent of the number of pools $M$ because $M E[\bU(\bp^{*})\bU(\bp^{*})^{\rT}] = E[\bU(\bp^{*}; D_1) \bU(\bp^{*}; D_1)^{\rT}]$, where $\bU(\bp^{*}; D_1) = W(\bA_1, \mb{X}_1) \nabla \ell(\bp^{*}|\widetilde{Z}_1, \bA_1)$ (cf. the proof of Lemma 4 in Appendix \ref{theory}).

In practice, the true propensity scores are unknown and the weights must be consistently estimated from the data. Let $\widehat{W}(\bA_i, \mb{X}_i)=\prod_{j=1}^{N}\left(\sum_{a=0}^{1}1\{A_{ij}=a\}\widehat{\pi}_{a}^{-1}(\bX_{ij})\right)$ denote the estimated pool-level weights. We first establish the the consistency of the resulting estimator as long as the weight estimates are themselves consistent:
\begin{theorem}\label{cor1}
        If $\|\widehat{\pi}_{1}-\pi^{*}_{1}\|_{\infty}=o_{p}(1)$,
        under Conditions (A1) -- (A3) of Appendix \ref{theory},
        there exists solutions  $\widehat{\bp}$ of $\widehat{\bU}(\bp)=0$ satisfying $\widehat{\bp}\rightarrow_{p}\bp^{*}$.
    \end{theorem}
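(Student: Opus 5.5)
The plan is to show that the estimated pseudo-score $\widehat{\bU}(\bp)$, built with $\widehat{W}$ in place of $W$, is uniformly close to the oracle pseudo-score $\bU(\bp)$ on a neighbourhood of $\bp^{*}$. The existence of a consistent root then transfers from Theorem~\ref{thm1}, or more directly from the local argument behind it.

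Write
\begin{align*}
\widehat{\bU}(\bp)-\bU(\bp)=\frac{1}{M}\sum_{i=1}^{M}\bigl\{\widehat{W}(\bA_i,\mb{X}_i)-W(\bA_i,\mb{X}_i)\bigr\}\nabla\ell(\bp|\widetilde{Z}_i,\bA_i).
\end{align*}
The first step is to control the weight error. Conditions (A1)--(A3) should provide positivity, $\pi_1^{*}(\bx)\in[c,1-c]$, together with a bounded pool size. Given $\|\widehat{\pi}_1-\pi_1^{*}\|_\infty=o_p(1)$, with probability tending to one $\widehat{\pi}_a\ge c/2$ uniformly. Each factor $\sum_a 1\{A_{ij}=a\}\widehat{\pi}_a^{-1}(\bX_{ij})$ is then bounded by $2/c$ and differs from its true counterpart by at most $(2/c^2)\|\widehat{\pi}_1-\pi_1^{*}\|_\infty$. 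Telescoping the product of $N\le N_{\max}$ factors gives
\begin{align*}
\max_i|\widehat{W}_i-W_i|\le N_{\max}(2/c)^{N_{\max}-1}(2/c^2)\|\widehat{\pi}_1-\pi_1^{*}\|_\infty=o_p(1).
\end{align*}

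The second step bounds the score factor. On a compact neighbourhood $\mc{B}$ of $\bp^{*}$ inside $(0,1)^2$, the functions $\widetilde{\eta}_0$ and $\widetilde{\phi}_1$ are polynomials in $\bp$ bounded away from zero; this should follow from $S_e,S_p\in(0,1)$ and $\bp^{*}$ interior, as in (A1)--(A3). Since $\widetilde{Z}$ and $\bA$ take finitely many values, $\sup_{\bp\in\mc{B}}\|\nabla\ell(\bp|\widetilde{Z},\bA)\|\le K$ deterministically. Combining the two steps gives
\begin{align*}
\sup_{\bp\in\mc{B}}\|\widehat{\bU}(\bp)-\bU(\bp)\|\le K\max_i|\widehat{W}_i-W_i|=o_p(1).
\end{align*}
The same argument applied to second derivatives gives $\sup_{\mc{B}}\|\dot{\widehat{\bU}}-\dot{\bU}\|=o_p(1)$.

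The third step establishes existence of a consistent root. From the proof of Theorem~\ref{thm1}, $\bU(\bp)\to\bar{\bU}(\bp):=E[W\nabla\ell(\bp|\widetilde{Z},\bA)]$ uniformly on $\mc{B}$ by a uniform LLN, which holds because the summands are bounded and Lipschitz in $\bp$. Moreover, $\bar{\bU}(\bp^{*})=0$ and $E[\dot{\bU}(\bp^{*})]=-\bV^{-1}$ is nonsingular. Hence $\widehat{\bU}\to\bar{\bU}$ uniformly in probability on $\mc{B}$.

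Fix a small $\epsilon$ such that $\bar{\bU}$ is injective on the ball $B_\epsilon(\bp^{*})$ with $\inf_{\|\bp-\bp^{*}\|=\epsilon}(\bp-\bp^{*})^{\rT}\bV^{-1}\dots$ bounded away from zero. I would use the standard argument, the Aitchison--Silvey / Brouwer degree argument: since $-\bar{\bU}$ is locally a homeomorphism with $\|\bar{\bU}\|\ge\delta>0$ on the sphere, any map within $\delta$ uniformly has a zero inside the ball. With probability tending to one, $\widehat{\bU}$ has a zero $\widehat{\bp}$ with $\|\widehat{\bp}-\bp^{*}\|<\epsilon$. Since $\epsilon$ is arbitrary, a diagonal choice yields $\widehat{\bp}\to_p\bp^{*}$.

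The main obstacle I expect is the first step. Because the product structure amplifies weight errors exponentially in $N$, bounded pool sizes and strict positivity for the estimated scores with high probability are genuinely needed. I would check that (A1)--(A3) supply these; if pool sizes are random, I would instead require $E[(2/c)^{N}N]<\infty$ and use the LLN on $\frac1M\sum_i N_i(2/c)^{N_i}$.
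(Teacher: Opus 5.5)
Your proposal is correct and follows essentially the same route as the paper. It uses the same decomposition $\widehat{\bU}-\bU=\frac{1}{M}\sum_i(\widehat{W}_i-W_i)\nabla\ell$, the same uniform bound via $\max_i|\widehat{W}_i-W_i|=o_p(1)$ times a score bounded by Lemma~\ref{lemma1}, and a Brouwer-type root-existence argument on a small sphere around $\bp^{*}$. The differences are minor: you use a degree argument with $\|\bar{\bU}\|\ge\delta$ on the sphere, whereas the paper checks Crowder's Theorem~3.2 through $E[\widehat{\bU}]$ and dominated convergence, and your telescoping bound spells out the weight-error step the paper only asserts. Note that both you and the paper tacitly need (A4) for nonsingularity of $E[\dot{\bU}(\bp^{*})]$, even though the statement lists only (A1)--(A3).
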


We next establish the asymptotic normality of the pseudo-maximum likelihood estimator using consistent estimated weights:
    
\begin{theorem}\label{thm2}
    If (1) $\widehat{\pi}_1(\cdot)-\pi_1^{*}(\cdot)=\mb{E}_{M_t}\phi(\bX_{ij},\cdot)+o_{p}(M_t^{-1/2})$ for some function $\phi$ satisfying 
    $E[\phi(\bX_{ij},\bx)]=0$  and $E[\phi^2(\bX_{ij},\bx)]$ is bounded above uniformly over $\bx\in\mc{X}$; and (2) $\widehat{\pi}_1\in \mc{H}$ and
    $\|\widehat{\pi}_1-\pi_1^{*}\|_{\infty}=o_{p}(M_t^{-1/4})$, where $\mc{H}$ be a vector space of functions with maximum norm bounded above by some constant and the covering number \citep[see, e.g., page 132 of ][]{vaart2023weak}, $\int_{0}^{\infty}\sqrt{\log N(\epsilon, \mc{H}, \|\cdot\|_{\infty})}d\epsilon<\infty$, then 
    $\sqrt{M}(\widehat{\bp}-\bp^{*})\rightarrow_{d} N(0, \bUGamma)$, for some matrix $\bUGamma$.
\end{theorem}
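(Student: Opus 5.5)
We will prove the result by combining a standard Z-estimator expansion of $\widehat{\bU}(\widehat{\bp})=0$ around $\bp^{*}$ with a functional expansion of the estimated pool weights. Theorem~\ref{cor1} gives a consistent root $\widehat{\bp}$. A Taylor expansion then yields
\begin{align*}
0=\widehat{\bU}(\bp^{*})+\dot{\widehat{\bU}}(\bar{\bp})(\widehat{\bp}-\bp^{*}),
\end{align*}
with $\bar{\bp}$ between $\widehat{\bp}$ and $\bp^{*}$. We will first show that $\dot{\widehat{\bU}}(\bar{\bp})\to_p E[\dot{\bU}(\bp^{*})]=-\bV^{-1}$. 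The weights are bounded because $\pi_1^{*}$ is bounded away from $0$ and $1$ (Conditions (A1)--(A3)) and $\|\widehat{\pi}_1-\pi_1^{*}\|_\infty=o_p(1)$. The functions $\widetilde{\eta}_0$ and $\widetilde{\phi}_1$ are smooth polynomials in $\bp$ that are bounded away from zero on a neighborhood of $\bp^{*}$. Together with a uniform law of large numbers over that compact neighborhood, these facts reduce the claim to the argument already used for Theorem~\ref{thm1}. Consequently $\sqrt{M}(\widehat{\bp}-\bp^{*})=\bV\sqrt{M}\,\widehat{\bU}(\bp^{*})+o_p(1)$, and it remains to find an i.i.d.\ representation of $\sqrt{M}\,\widehat{\bU}(\bp^{*})$.

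Write $g(D;\pi)=W_\pi(\bA,\mb{X})\nabla\ell(\bp^{*}|\widetilde{Z},\bA)$, where $W_\pi$ is \eqref{GT_IPW} with propensity $\pi$. We decompose
\begin{align*}
\widehat{\bU}(\bp^{*})-\bU(\bp^{*})
=\bigl\{\mb{E}_M[g(D;\widehat{\pi})-g(D;\pi^{*})]-E[g(D;\widehat{\pi})-g(D;\pi^{*})]\bigr\}
+E[g(D;\widehat{\pi})-g(D;\pi^{*})].
\end{align*}
Here $E$ integrates over a fresh $D$ with $\widehat{\pi}$ held fixed, and $\mb{E}_M$ is the empirical mean over pools.

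The first bracket is an empirical-process term. The class $\{g(\cdot;\pi):\pi\in\mc{H}\}$ inherits a finite uniform entropy integral from $\mc{H}$, because $W_\pi$ is a finite product of Lipschitz maps $\pi\mapsto 1/\pi(\bX_j)$, $1/(1-\pi(\bX_j))$ on functions bounded away from $0,1$, and $N$ is bounded. The class is therefore Donsker. Since $\|\widehat{\pi}_1-\pi_1^{*}\|_\infty\to_p 0$ gives $L_2$-continuity, asymptotic equicontinuity makes the first bracket $o_p(M^{-1/2})$.

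For the second, bias, term we Taylor expand $W_{\widehat{\pi}}$ around $W_{\pi^{*}}$ to second order. The remainder is $O(\|\widehat{\pi}_1-\pi_1^{*}\|_\infty^2)=o_p(M_t^{-1/2})=o_p(M^{-1/2})$, which uses $M_t\asymp M$ for bounded pool sizes. The linear term has the form $\sum_j E[h_j(\bA,\mb{X},\widetilde{Z})(\widehat{\pi}_1-\pi_1^{*})(\bX_j)]$, where $h_j=-\partial W/\partial\pi(\bX_j)\cdot\nabla\ell$. Substituting assumption (1), this becomes $\mb{E}_{M_t}\bigl[\psi(\bX_{ij})\bigr]+o_p(M_t^{-1/2})$, with $\psi(\bx')=\sum_jE[h_j\,\phi(\bx',\bX_j)]$. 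Interchanging expectations is justified by the uniform second-moment bound on $\phi$ and boundedness of $h_j$, which also gives $E\psi^2<\infty$ and $E\psi=0$. Grouping $\psi(\bX_{ij})$ by pool gives $\frac{1}{M}\sum_i \frac{M}{M_t}\sum_j\psi(\bX_{ij})$, and $M/M_t\to 1/E[N]$.

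Hence $\sqrt{M}\,\widehat{\bU}(\bp^{*})=M^{-1/2}\sum_i\{g(D_i;\pi^{*})+E[N]^{-1}\sum_{j}\psi(\bX_{ij})\}+o_p(1)$. This is a sum of i.i.d.\ mean-zero pool-level terms; $E[g(D;\pi^{*})]=0$ by the unbiasedness established for Theorem~\ref{thm1}. The multivariate CLT and Slutsky then give asymptotic normality with $\bUGamma=\bV\,\mathrm{Var}\{g(D;\pi^{*})+E[N]^{-1}\sum_j\psi(\bX_j)\}\bV^{\rT}$.

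The main obstacle is the bias term, where we must handle the coupling between the estimated propensity and the data entering the score. This requires checking that the functional derivative of the pool-level product weight is uniformly bounded, and passing the influence representation through the expectation rigorously. The Donsker step may also need care to confirm that composing with $1/\pi$ preserves the entropy integral; I would do this through a Lipschitz-transformation bound on covering numbers.
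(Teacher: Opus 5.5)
Your proof is correct, and it follows the same semiparametric two-step skeleton as the paper. The difference is that you carry the argument out directly, while the paper delegates it by verifying conditions (2.1)--(2.6) of Theorem 2 of \cite{chen2003estimation}. Your pieces map onto those conditions one for one:
\begin{itemize}
\item Your second-order expansion of $W_{\widehat{\pi}}$, with remainder $O(\|\widehat{\pi}_1-\pi_1^{*}\|_\infty^2)=o_p(M_t^{-1/2})$, is their (2.3)--(2.4). The paper's $h(D_i,\bUpi)$ is your linear term.
\item Your Donsker/equicontinuity step is their (2.5'). The paper checks it through their Theorem 3, using a Lipschitz bound jointly in $(\bp,\pi)$ and the entropy condition on $\mc{H}$.
\item Your $\psi$ with $E\psi=0$ is their (2.6').
\end{itemize}

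The direct route buys three things.
\begin{enumerate}
\item Since $\widehat{W}$ does not involve $\bp$ and $\widehat{\bU}$ is smooth in $\bp$, you can expand around $\bp^{*}$ first. The empirical process then only has to be controlled over $\{g(\cdot;\pi)\}$ at the fixed point $\bp^{*}$, not jointly in $(\bp,\bUpi)$ as the general theorem, built for non-smooth criteria, requires.
\item You get an explicit influence function and hence an explicit $\bUGamma$, which the paper states its proof does not provide.
\item You make explicit two steps the paper's proof leaves implicit. One is passing from the $M_t$-normalized representation $\mb{E}_{M_t}\psi$ to the $\sqrt{M}$ scale. The other is regrouping the individual-level terms $\psi(\bX_{ij})$ with $g(D_i;\bUpi^{*})$ into i.i.d.\ pool-level summands, so that a single CLT covers both.
\end{enumerate}
The paper's route buys brevity and reliance on an already-proved general theorem.

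Two small fixes are needed.
\begin{itemize}
\item With $h_j=-\partial W/\partial\pi(\bX_j)\cdot\nabla\ell$, the first-order bias term is $-\sum_j E[h_j(\widehat{\pi}_1-\pi_1^{*})(\bX_j)]$, not $+$. This flips the sign of the cross-covariance in your $\bUGamma$ but does not affect the conclusion.
\item The mean-value step for the vector-valued $\widehat{\bU}$ needs row-wise intermediate points, as in the paper's vector-valued mean value argument, rather than a single $\bar{\bp}$.
\end{itemize}
Finally, both you and the paper need the $o_p(M_t^{-1/2})$ remainder in assumption (1) to hold uniformly in $\bx$. Without that, it cannot be pushed through the expectation over a fresh pool.
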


Note that while the proof in Appendix \ref{theory} rigorously establishes the existence of the asymptotic variance matrix $\bUGamma$, it does not provide the explicit algebraic form. In our numerical studies, variance quantities that derive from $\bUGamma$, such as the variance of ATE estimates, are approximated via Monte Carlo simulations and bootstrapping.

\section{Numerical Studies} \label{simulation}

\subsection{Settings} \label{simulation_set}

To validate the performance of our proposed inverse probability weighted pseudo-likelihood estimator on confounded data, we simulated parametric data where the baseline covariates $\bX$ systematically confound the effect of the treatment $A$ on the binary infection outcome $Y$. This confounding is achieved via a hierarchical generation scheme: first, the covariate vector $\bX$ is drawn from a specified distribution; second, individual treatment assignments are generated via a Bernoulli scheme conditioned on the covariates, $A \mid \bX = \bx \sim \text{Bern}(p_{\bx})$; finally, potential individual infection statuses are simulated as $Y^{(a)} \mid \bX = \bx \sim \text{Bern}(p_{\bx, a})$, $a = 0, 1$, while the observed individual infection statuses are set to $Y = \sum_{a=0,1} 1\{A = a\} Y^{(a)}$. Here, $p_{\bx}$ and $p_{\bx, a}$ denote the underlying success probability parameters governing treatment assignment and disease development, respectively. We evaluated our method under two distinct structural settings:
\begin{enumerate}[({M}1)]
    \item The first setting is explicitly inspired by real-world influenza vaccination surveillance data reported by \citet{USFluVE2018}. It features three baseline covariates, $\bX = (X_1, X_2, X_3)^{\rT}$, heuristically representing patient age, underlying health status, and gender, respectively. These attributes are generated independently according to:
    $$
    X_1 \sim N(40, 20^2), \ X_2 \sim \text{Bern}(0.2), \ \text{and} \ X_3 \sim \text{Bern}(0.5).
    $$ 
    The individual treatment assignment and true infection status represent vaccination and influenza infection, respectively. The corresponding conditional probabilities are defined via the logistic parameters:
    $p_{\bx} = \text{logit}^{-1}(5 -\boldsymbol{\beta}_{1}^{\rT}\bx)$ and $p_{\bx, a} = \text{logit}^{-1}(7.2 -\boldsymbol{\beta}_{2}^{\rT}\bx +2a)
    $,
    where $\boldsymbol{\beta}_{1} = (0.1, 2, 0)^{\rT}$ and $\boldsymbol{\beta}_{2} = (0.1, 3, 0)^{\rT}$.
    
    \item The second setting is adapted from \citet{lin2019regression} to introduce interdependence among the baseline confounders. A vector of five continuous covariates is generated from a multivariate normal distribution, $\bX = (X_1, X_2, X_3, X_4, X_5)^{\rT} \sim N(\mathbf{0}, \bS)$, governed by a  correlation structure where $\bS_{ij} = 1$ if $i = j$ and $\bS_{ij} = 0.25$ otherwise. The corresponding treatment assignment and infection probabilities are governed by:
    $
    p_{\bx} = \text{logit}^{-1}(3 -\boldsymbol{\beta}_{1}^{\rT}\bx)
    $ and $p_{\bx, a} = \text{logit}^{-1}(3 -\boldsymbol{\beta}_{2}^{\rT}\bx +4a)
    $,
    where $\boldsymbol{\beta}_{1} = (-2, -2, -2, 2, 2)^{\rT}$ and $\boldsymbol{\beta}_{2} = (-2, -2, 1, 2, 0)^{\rT}$.
\end{enumerate} In both settings, the baseline parameters are chosen such that the true average treatment effect, $\tau = E[p_{\bX, 1}] - E[p_{\bX, 0}]$, is negative, indicating that the active intervention successfully decreases the probability of infection. Because these non-linear transformations lack a closed analytical form, the true average treatment effect is computed numerically via a Monte Carlo integration over $25,000$ independent draws from the covariate distributions.

To construct the group-level data, we randomly partition a complete sample of $M_t$ individuals generated via the mechanisms above into $M$ pools $\{G_i\}_{i=1}^M$ of a uniform preset size $N$. Imperfect pooled test responses $\widetilde{Y}^{*}_i$ are subsequently simulated from each latent, true pooled group status $Y_i^\ast = \max_{j \in G_i} Y_{ij}$ by fixing both the assay sensitivity ($S_e$) and specificity ($S_p$) to $0.99$, following standard diagnostic misclassification rates for benchmarks \citep{lin2019regression}.

\subsection{Results} \label{simulation_res}

Our numerical analyses were conducted using a total sample size of $M_t = 25,000$ individuals. To systematically evaluate the performance of our framework across varying pooling configurations, separate simulation runs were performed by varying the uniform pool sizes across $N = 2, 3, 5, 10, 15, 20, 25$. For each configuration, we executed $S = 1,000$ independent replications. The point estimates for the ATE parameters were computed within each replication by numerically maximizing the weighted pseudo-log-likelihood function \eqref{eq:pseudo_ll}.

The empirical averages of the ATE estimates over the $S = 1,000$ simulation runs are summarized in Tables~\ref{table:m1} and \ref{table:m2} for Models 1 and 2, respectively. Within each table, we contrast the performance of the proposed estimator across three distinct data regimes: individual-level outcomes (columns labeled $Y$), latent true group outcomes (columns labeled $Y^\ast$), and observed group test outcomes (columns labeled $\widetilde{Y}^{*}$). To isolate the impact of confounding adjustment, each regime evaluates three distinct weight structures: unweighted baseline models, models incorporating true propensity weights (columns labeled $W$), and models utilizing estimated plug-in weights (columns labeled $\widehat{W}$). The root mean squared errors across the $S = 1,000$ replications are reported in parentheses alongside their point estimate averages, and their trajectories across pool sizes are visualized in Figures~\ref{fig:m1} and \ref{fig:m2}.

Tables~\ref{table:m1} and \ref{table:m2} reveal a substantial discrepancy between the unadjusted, unweighted estimators and our proposed inverse probability weighted pseudo-likelihood estimators ($W$ and $\widehat{W}$). For Model 1, the true analytical ATE is $\tau \approx -0.1228$. The unweighted estimators across both the individual-level and group-level data regimes fail completely, yielding estimates in the incorrect direction. Specifically, they erroneously predict that receiving the treatment increases the absolute probability of infection by approximately $7\%$, whereas the true causal effect is an absolute reduction of roughly $13\%$. 

We observe a similar pattern in Model 2, where the true analytical ATE is $\tau \approx -0.1750$. Here, the unweighted baseline estimators predict an absolute infection probability reduction of only $6\%$. While this estimate captures the correct sign, it significantly underestimates the true magnitude of the treatment's protective effect. Conversely, across both models, the proposed IPW pseudo-likelihood estimators yield point estimates that are closely aligned with the true ATE, particularly under low-to-moderate pool sizes. These findings provide strong empirical validation for our framework, confirming that the pool-level IPW architecture successfully purges the data of selection and confounding bias. Additionally, Figures~\ref{fig:m1} and \ref{fig:m2} illustrate the relative empirical behavior of our estimator when utilizing true versus estimated weights within this specific simulation framework, tracking their respective finite-sample variance properties.

Finally, a distinct trend emerges across both Tables~\ref{table:m1} and \ref{table:m2} as the uniform pool size expands: the weighted pseudo-likelihood estimates exhibit a gradual increase in finite-sample bias as $N$ grows large. This behavior is accompanied by a reliable, monotonic increase in the empirical standard errors, as visualized in Figures~\ref{fig:m1} and \ref{fig:m2}. This simultaneous expansion of finite-sample bias and variance under large pool configurations aligns perfectly with established principles in the group testing literature. In standard group screening applications, large pool sizes inherently compound diagnostic misclassification and dilute subject-specific information. This observed behavior directly highlights the importance of the extensive literature dedicated to identifying optimal group sizes that balance the fundamental trade-off between logistical cost savings and statistical precision \citep[see, e.g.,][]{kim2007comparison, bilder2010optimal, tebbs2013nonparametric}.

\begin{table}
\centering
\caption{Model 1 Mean Estimated ATEs from $1000$ Simulations} 
\label{table:m1}
\begin{tabular}{ c  c c c  }
\hline
\hline
$-$ & $Y$ &  $Y$, $W$ &  $Y$, $\widehat{W}$ \\
\hline
$ - $ & $ 0.0672 (  0.0361 ) $ & $ -0.1223 (  0.0002 ) $ & $ -0.1224 (  0.0002 ) $  \\  
\hline
$N$  &$Y^\ast$ &$Y^\ast$, $W$  &$Y^\ast$, $\widehat{W}$\\
\hline
$ 2 $ & $ 0.0672 (  0.0361 ) $ & $ -0.1208 (  0.0006 ) $ & $ -0.1209 (  0.0006 ) $  \\ 
$ 3 $ & $ 0.0672 (  0.0361 ) $ & $ -0.1208 (  0.0018 ) $ & $ -0.1210 (  0.0018 ) $  \\ 
$ 5 $ & $ 0.0672 (  0.0361 ) $ & $ -0.1146 (  0.0028 ) $ & $ -0.1144 (  0.0027 ) $  \\ 
$ 10 $ & $ 0.0672 (  0.0361 ) $ & $ -0.0935 (  0.0058 ) $ & $ -0.0937 (  0.0058 ) $  \\ 
$ 15 $ & $ 0.0671 (  0.0361 ) $ & $ -0.0713 (  0.0067 ) $ & $ -0.0714 (  0.0067 ) $  \\ 
$ 20 $ & $ 0.0671 (  0.0361 ) $ & $ -0.0585 (  0.0083 ) $ & $ -0.0586 (  0.0083 ) $  \\ 
$ 25 $ & $ 0.0670 (  0.0360 ) $ & $ -0.0469 (  0.0095 ) $ & $ -0.0468 (  0.0095 ) $  \\  
\hline
$N$  &$\widetilde{Y}^{*}$ &$\widetilde{Y}^{*}$, $W$  &$\widetilde{Y}^{*}$, $\widehat{W}$\\
\hline
$ 2 $ & $ 0.0669 (  0.0360 ) $ & $ -0.1213 (  0.0008 ) $ & $ -0.1214 (  0.0008 ) $  \\ 
$ 3 $ & $ 0.0671 (  0.0361 ) $ & $ -0.1211 (  0.0031 ) $ & $ -0.1213 (  0.0030 ) $  \\ 
$ 5 $ & $ 0.0673 (  0.0362 ) $ & $ -0.1162 (  0.0052 ) $ & $ -0.1158 (  0.0051 ) $  \\ 
$ 10 $ & $ 0.0673 (  0.0362 ) $ & $ -0.0908 (  0.0096 ) $ & $ -0.0909 (  0.0096 ) $  \\ 
$ 15 $ & $ 0.0670 (  0.0361 ) $ & $ -0.0685 (  0.0112 ) $ & $ -0.0686 (  0.0113 ) $  \\ 
$ 20 $ & $ 0.0670 (  0.0361 ) $ & $ -0.0533 (  0.0146 ) $ & $ -0.0532 (  0.0147 ) $  \\ 
$ 25 $ & $ 0.0670 (  0.0361 ) $ & $ -0.0416 (  0.0162 ) $ & $ -0.0414 (  0.0162 ) $  \\ 
\hline
\end{tabular} 
\end{table}
	
\begin{table}
\centering
\caption{Model 2 Mean Estimated ATEs from $1000$ Simulations} 
\label{table:m2}
\begin{tabular}{ c c c c  }
\hline
\hline
$-$ & $Y$ &  $Y$, $W$ &  $Y$, $\widehat{W}$ \\
\hline
$ - $ & $ -0.0567 (  0.0140 ) $ & $ -0.1746 (  0.0007 ) $ & $ -0.1748 (  0.0006 ) $  \\  
\hline
$N$  &$Y^\ast$ &$Y^\ast$, $W$  &$Y^\ast$, $\widehat{W}$\\
\hline
$ 2 $ & $ -0.0567 (  0.0140 ) $ & $ -0.1667 (  0.0031 ) $ & $ -0.1672 (  0.0032 ) $  \\ 
$ 3 $ & $ -0.0567 (  0.0140 ) $ & $ -0.1602 (  0.0056 ) $ & $ -0.1605 (  0.0057 ) $  \\ 
$ 5 $ & $ -0.0567 (  0.0140 ) $ & $ -0.1552 (  0.0089 ) $ & $ -0.1550 (  0.0088 ) $  \\ 
$ 10 $ & $ -0.0567 (  0.0140 ) $ & $ -0.1286 (  0.0075 ) $ & $ -0.1286 (  0.0075 ) $  \\ 
$ 15 $ & $ -0.0568 (  0.0140 ) $ & $ -0.1237 (  0.0083 ) $ & $ -0.1235 (  0.0084 ) $  \\ 
$ 20 $ & $ -0.0568 (  0.0140 ) $ & $ -0.1157 (  0.0090 ) $ & $ -0.1159 (  0.0089 ) $  \\ 
$ 25 $ & $ -0.0567 (  0.0140 ) $ & $ -0.1086 (  0.0097 ) $ & $ -0.1085 (  0.0097 ) $  \\ 
\hline
$N$  &$\widetilde{Y}^{*}$ &$\widetilde{Y}^{*}$, $W$  &$\widetilde{Y}^{*}$, $\widehat{W}$\\
\hline
$ 2 $ & $ -0.0569 (  0.0139 ) $ & $ -0.1648 (  0.0039 ) $ & $ -0.1653 (  0.0041 ) $  \\ 
$ 3 $ & $ -0.0566 (  0.0140 ) $ & $ -0.1604 (  0.0096 ) $ & $ -0.1605 (  0.0097 ) $  \\ 
$ 5 $ & $ -0.0564 (  0.0141 ) $ & $ -0.1510 (  0.0142 ) $ & $ -0.1509 (  0.0141 ) $  \\ 
$ 10 $ & $ -0.0567 (  0.0140 ) $ & $ -0.1242 (  0.0156 ) $ & $ -0.1238 (  0.0157 ) $  \\ 
$ 15 $ & $ -0.0570 (  0.0140 ) $ & $ -0.1083 (  0.0168 ) $ & $ -0.1081 (  0.0170 ) $  \\ 
$ 20 $ & $ -0.0565 (  0.0141 ) $ & $ -0.0979 (  0.0177 ) $ & $ -0.0981 (  0.0176 ) $  \\ 
$ 25 $ & $ -0.0564 (  0.0141 ) $ & $ -0.0925 (  0.0188 ) $ & $ -0.0925 (  0.0189 ) $  \\ 
\hline
\end{tabular} 
\end{table}

\begin{figure}
\centering

\begin{minipage}{0.4\textwidth}
    \centering
    {\small \textbf{A} Using true group outcomes $Y^{\ast}$}
    \includegraphics[width=\linewidth]{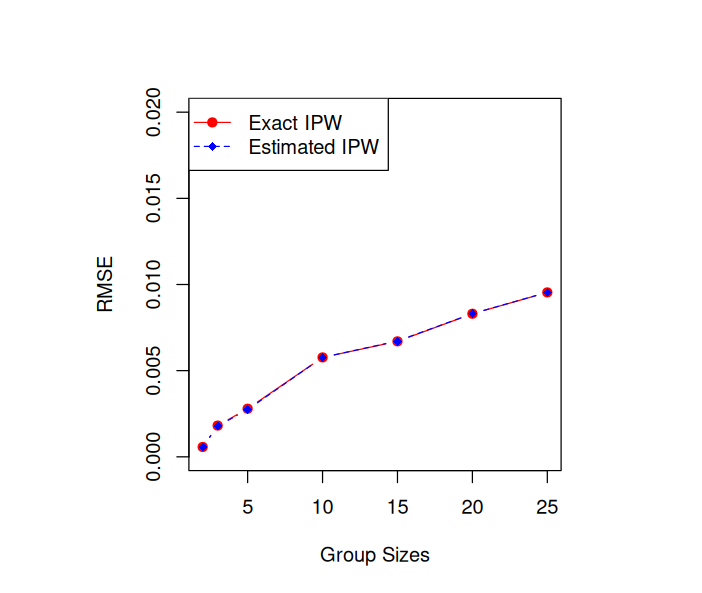}
    \label{fig:m1a}
\end{minipage}
\hfill
\begin{minipage}{0.4\textwidth}
    \centering
    {\small \textbf{B} Using group test outcomes $\widetilde{Y}^{*}$}
    \includegraphics[width=\linewidth]{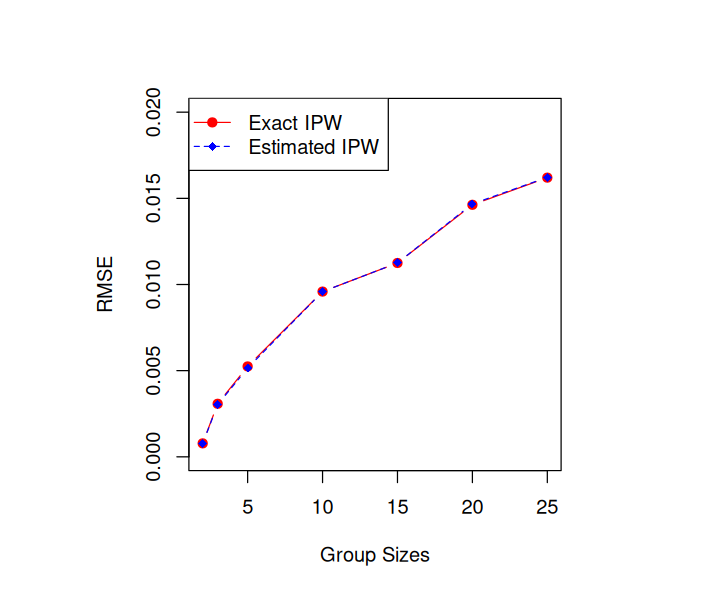}
    \label{fig:m1b}
\end{minipage}

\caption{RMSE of estimated ATEs for Model 1 based on $S=1000$ simulations}
\label{fig:m1}

\end{figure}

\begin{figure}
\centering

\begin{minipage}{0.4\textwidth}
    \centering
    {\small \textbf{A} Using true group outcomes $Y^{\ast}$}
    \includegraphics[width=\linewidth]{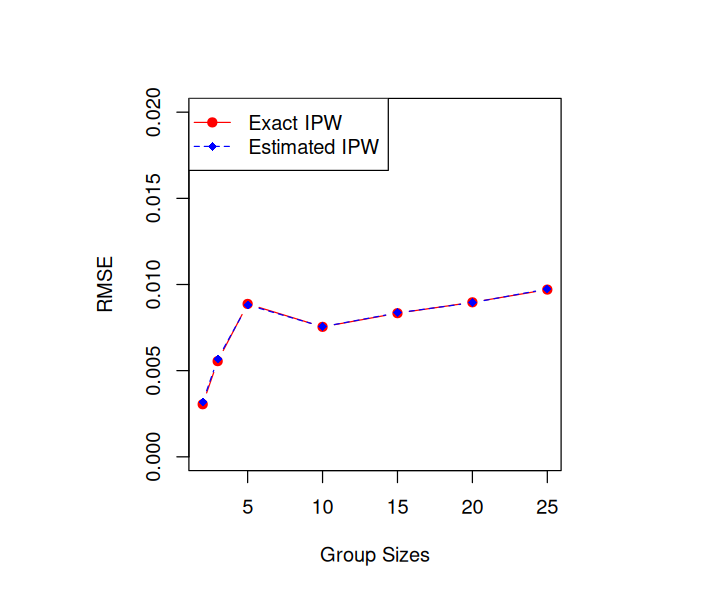}
    \label{fig:m2a}
\end{minipage}
\hfill
\begin{minipage}{0.4\textwidth}
    \centering
    {\small \textbf{B} Using group test outcomes $\widetilde{Y}^{*}$}
    \includegraphics[width=\linewidth]{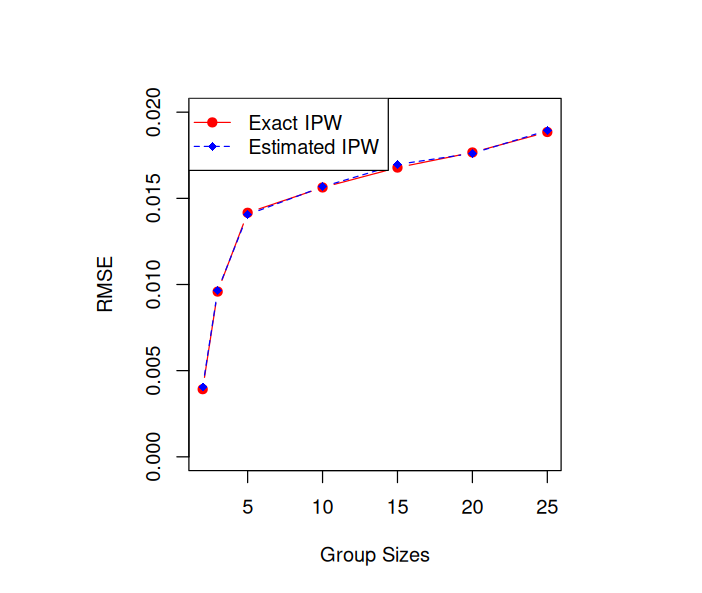}
    \label{fig:m2b}
\end{minipage}

\caption{RMSE of estimated ATEs for Model 2 based on $S=1000$ simulations}
\label{fig:m2}

\end{figure}

\section{Case Study} \label{casestudy}

\subsection{The 2018–19 US Flu VE Network Dataset}

The US Flu Vaccine Effectiveness (VE) Network is a multi-site, prospective surveillance program overseen by the Centers for Disease Control and Prevention (CDC). Its primary objective is to provide annual, real-world estimates of influenza vaccine effectiveness. Unlike controlled clinical trials, this network evaluates the vaccine’s performance in a highly heterogeneous, observational population. For this application, we utilize the database from the 2018–2019 influenza season, which comprises a total sample size of $M_t = 10,443$ enrolled participants \citep{USFluVE2018}.

Individual-level baseline information in this surveillance database includes age in years at enrollment, sex derived from electronic medical records (EMR), self-rated health status recorded on a 1–5 ordinal scale, and four distinct ethnicity-derived categories. Individual vaccination status  was determined hierarchically by prioritizing verified EMR documentation followed by validated patient self-reports. Individual-level true infection status  was determined through site interpretation of a standard reverse transcription polymerase chain reaction (RT-PCR) assay for influenza. 

The 2018–19 US Flu VE Network dataset provides an appropriate environment for evaluating causal group testing via Inverse Probability Weighting (IPW). Previous studies using data from the same network shows that this specific season is notable because of the significant heterogeneity in vaccine response across demographic groups; for instance, while overall vaccine effectiveness was moderate, protection levels varied sharply based on patient age and health status \cite{Chung2020_2018Data}. This stands in contrast to studies from more uniform seasons, such as the 2014–15 season, where vaccine mismatch was found to be fairly equally widespread across all demographic strata \citep{Flannery2015}. Thus, by applying our method to this heterogeneous season, we can determine whether measured covariates—such as pre-existing conditions and age—function as true confounders or if the observed vaccine outcomes are robust to selection bias. Causal methodologies for this manner of investigation into vaccine effectiveness already exist \citep{Sullivan2016}, but we note that such methodologies are, as yet, only applicable to individual-level observations.

\subsection{Group Test Setting}

Because the 2018–19 US Flu VE Network dataset provides individual-level outcome information, we implemented a synthetic group testing overlay on this real-world surveillance database to evaluate our framework. For each fixed pool configuration, the $M_t = 10,443$ individuals were randomly partitioned into non-overlapping groups $\{G_i\}_{i=1}^M$ of a uniform preset size $N$. Group-level infection tests ($\widetilde{Y}_i^{*}$) were subsequently simulated from the latent, true pooled group infection statuses $Y_i^\ast = \max_{j \in G_i} Y_{ij}$. To maintain consistency with our previous numerical studies and account for realistic diagnostic misclassification, the assay specificity ($S_p$) and sensitivity ($S_e$) were both fixed at 0.99. 

We evaluated a variety of uniform pool sizes ($N = 2, 3, 5, 10, 15, 20, 25$) to systematically investigate the impact of information dilution on the accuracy of our causal ATE estimates. To quantify the sampling variability of our estimators under each pooling structure and construct valid empirical variance metrics, we implemented a bootstrap resampling procedure executed across 1000 replications.

\subsection{Results} \label{casestudy_res}

The empirical ATE point estimates and their corresponding uncertainty metrics for the influenza surveillance data are compiled in Table~\ref{table:casestudy}. Mirroring the structural layout of our numerical simulations, we present the causal estimates across three data regimes: individual-level outcomes (columns labeled $Y$), latent true group statuses (columns labeled $Y^\ast$), and observed, misclassified group test outcomes (columns labeled $\widetilde{Y}^{*}$). Within each data regime, we evaluate the discrepancies between the unweighted baseline models and the models utilizing estimated plug-in weights (columns labeled $\widehat{W}$). Because these data are drawn from an observational cohort, the true population propensity scores are  unknown and must be approximated using our estimated weights, contrasting with the controlled parametric setups of Section~\ref{simulation_res}. The bootstrap standard errors of the ATE estimates, calculated across $B = 1000$ non-parametric bootstrap replications, are reported in parentheses alongside each point estimate and are visually tracked across pool sizes in Figure~\ref{fig:casestudy}.

Regarding the impact of information dilution, the estimation results exhibit the same broad structural trends observed in our controlled simulation experiments. As illustrated in Figure~\ref{fig:casestudy}, the bootstrap standard errors expand reliably and monotonically as the uniform pool size $N$ increases. However, in contrast to the divergence observed in Section~\ref{simulation_res}, the unweighted baseline estimates and our proposed weighted pseudo-likelihood estimates ($\widehat{W}$) are remarkably close across all evaluated pool sizes. Both specifications consistently indicate that the 2018–19 influenza vaccine reduced the absolute marginal infection rate by approximately $5\%$ to $6\%$. 

In the absence of an unconfounded experimental benchmark, this close empirical concordance carries an important epidemiological implication: it indicates that after conditioning on the measured baseline attributes—age, sex, and self-rated health status—the empirical vaccine effectiveness estimates within the 2018–19 network are largely robust to overt selection bias. The observed protective effect can be more confidently attributed to biological vaccine efficacy rather than being artificially driven by health-seeking behaviors or demographic imbalances among vaccinated individuals.

\begin{table}
\centering
\caption{Estimated ATEs for the 2018-19 US Flu Vaccination Data with Bootstrap SEs from $B=1000$ resamples} 
\label{table:casestudy}
\begin{tabular}{ c c c  }
\hline
\hline
$-$ & $Y$ &  $Y$, $\widehat{W}$ \\
\hline
$ - $ & $ -0.0642 (  0.0086 ) $ & $ -0.0507 (  0.0100 ) $  \\ 
\hline
$N$  &$Y^\ast$   &$Y^\ast$, $\widehat{W}$\\
\hline
$ 2 $ & $ -0.0642 (  0.0086 ) $ & $ -0.0541 (  0.0089 ) $  \\ 
$ 3 $ & $ -0.0642 (  0.0090 ) $ & $ -0.0502 (  0.0094 ) $  \\ 
$ 5 $ & $ -0.0642 (  0.0089 ) $ & $ -0.0449 (  0.0098 ) $  \\ 
$ 10 $ & $ -0.0643 (  0.0088 ) $ & $ -0.0474 (  0.0104 ) $  \\ 
$ 15 $ & $ -0.0643 (  0.0087 ) $ & $ -0.0440 (  0.0118 ) $  \\ 
$ 20 $ & $ -0.0644 (  0.0086 ) $ & $ -0.0477 (  0.0134 ) $  \\ 
$ 25 $ & $ -0.0642 (  0.0088 ) $ & $ -0.0541 (  0.0154 ) $  \\ 
\hline
$N$  &$\widetilde{Y}^{*}$  &$\widetilde{Y}^{*}$, $\widehat{W}$\\
\hline
$ 2 $ & $ -0.0612 (  0.0088 ) $ & $ -0.0504 (  0.0091 ) $  \\ 
$ 3 $ & $ -0.0640 (  0.0092 ) $ & $ -0.0506 (  0.0097 ) $  \\ 
$ 5 $ & $ -0.0644 (  0.0092 ) $ & $ -0.0464 (  0.0101 ) $  \\ 
$ 10 $ & $ -0.0642 (  0.0089 ) $ & $ -0.0469 (  0.0107 ) $  \\ 
$ 15 $ & $ -0.0625 (  0.0091 ) $ & $ -0.0444 (  0.0122 ) $  \\ 
$ 20 $ & $ -0.0653 (  0.0089 ) $ & $ -0.0460 (  0.0138 ) $  \\ 
$ 25 $ & $ -0.0644 (  0.0089 ) $ & $ -0.0524 (  0.0157 ) $  \\  
\hline
\end{tabular} 
\end{table}

\begin{figure}
\centering

\begin{minipage}{0.4\textwidth}
    \centering
    {\small \textbf{A} Using true group outcomes $Y^{\ast}$}
    \includegraphics[width=\linewidth]{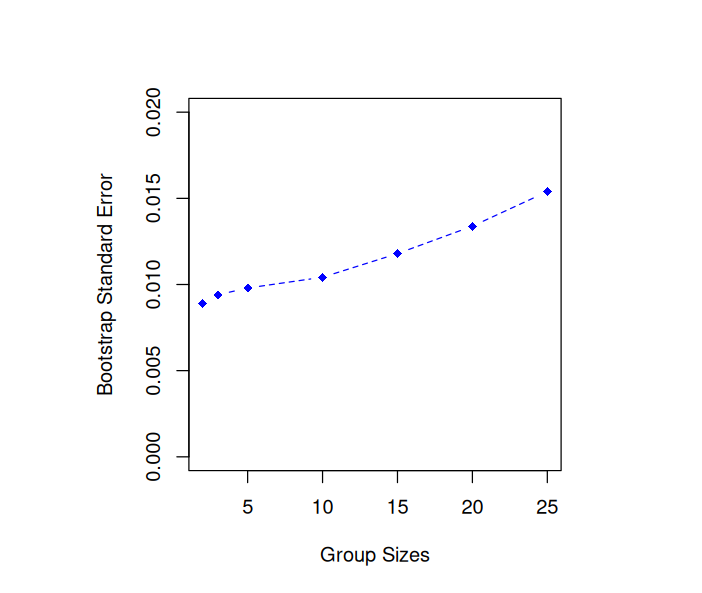}
    \label{fig:casea}
\end{minipage}
\hfill
\begin{minipage}{0.4\textwidth}
    \centering
     {\small \textbf{B} Using group test outcomes $\widetilde{Y}^{*}$}
    \includegraphics[width=\linewidth]{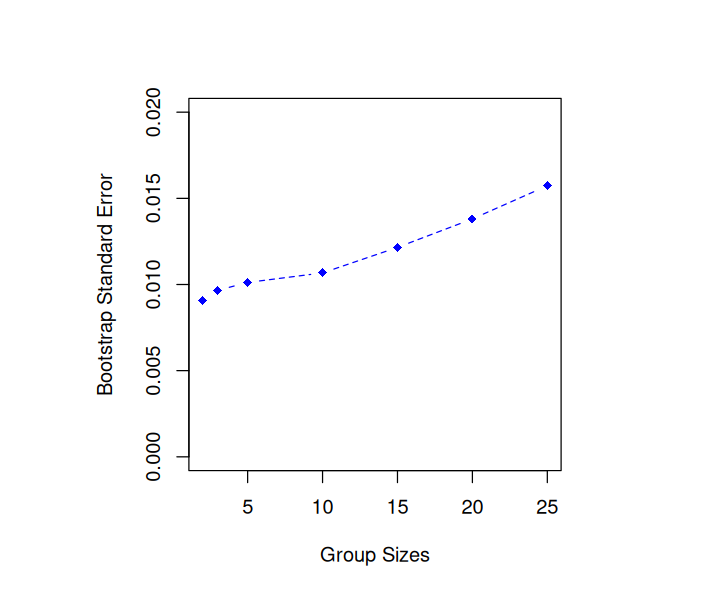}
    \label{fig:caseb}
\end{minipage}

\caption{2018-19 US Flu Data RMSE of Estimated ATEs based on $B=1000$ Bootstraps}
\label{fig:casestudy}

\end{figure}


\section{Discussion} \label{discussion}

In this work, we introduced a novel causal inference framework for group testing data by integrating inverse probability weighting (IPW) into a pooled pseudo-likelihood formulation. Our numerical studies demonstrate that when baseline confounding is present, traditional group testing estimators fail to capture the true average treatment effect (ATE). In severe cases of selection bias, unadjusted models can even yield estimates in the incorrect direction. By incorporating pool-level IPW adjustments, our proposed framework successfully recovers estimates that align with the true analytical ATE, under low-to-moderate pool sizes.

The simulation results revealed a clear trade-off: while expanding the uniform pool size $N$ inherently reduces the testing burden, it introduces finite-sample bias and increases the empirical standard errors of the ATE estimates. This behavior suggests that in causal applications of group testing, investigators must carefully balance financial constraints against their statistical tolerance for finite-sample bias in the causal effect estimate.

Our application to the 2018–19 US Flu VE Network surveillance database demonstrated an alternative utility for our method. In the case study, the differences between the weighted and unweighted ATE estimates were insignificant. This empirical concordance implies that for this specific influenza season, selection bias driven by the measured baseline attributes—age, sex, and self-rated health status—was minimal, or that the vaccine's protective effect was relatively uniform across the sampled population. Thus, our proposed framework can function not only as an estimation tool, but also as a valuable diagnostic asset for investigating the presence of confounding mechanisms within routine public health screening registries.

Future research can extend this methodology in several promising directions. First, building directly upon our findings regarding the constraints of large pool sizes, a next step is to derive objective functions to determine optimal group sizes that balance testing budgets against statistical precision. Second, a natural progression within the causal domain is to develop a ``doubly robust'' extension by marrying our pool-level IPW pseudo-score with an individual or group-level outcome regression model. Such an architecture would provide valid causal inferences if either the propensity score model or the outcome model is correctly specified, offering protection against model misspecification. Finally, more complex causal topologies could be explored to account for network interference or mediation pathways, where an individual's exposure or treatment within a given pool impacts the outcomes of their peers.

Ultimately, our framework represents an initial foray into bridging group testing and causal inference—two vast, independent fields that are both of increasing relevance to modern epidemiological practice. We hope this work serves as a foundation that inspires further methodology in this fruitful intersection.

\bibliographystyle{unsrtnat} \bibliography{biom}

\clearpage

\begin{appendices}

\section{Likelihoods of Randomized Controlled Trials} \label{rct_likelihood}
		
		Here we derive the likelihood for group testing in the context of RCTs. We seek to express the likelihood in terms of quantities depending only on the observed group-level quantities $\bD^{(0)} = \{(\widetilde{Z}_i, \bA_i), i = 1, \ldots, M\}$ and the parameter $\bp = \{E[Y^{(a)}]\}_{a\in \mc{A}}$.

        Technically for the derivations, we need to assume that pool-level sensitivity and specificity are intrinsic properties of the assay that do not shift across different treatment states or results of the retesting phase i.e. $\widetilde{Y}^{*} \independent \bA, \widetilde{\bY} \ | \ Y^{*}$. Similarly, we assume $\widetilde{Y}_j \independent A_j \ | \ Y_j$ for $j=1 , \ldots, N$ for the individual-level sensitivity and specificity during the retesting phase. 
		
		We first express the conditional probability of the getting a negative group in terms of the observed treatments and the parameter as
		$$
		\eta_{0}(\ba;\bp) := P[Y^{*}=0|\bA=\ba; \bp]=\prod_{j=1}^{N}(1-P[Y_{j}=1|A_{j}=a_{j}; \bp])=
		\prod_{j=1}^{N}(1-p_{a_{j}}). \label{eq:eta0}
		$$
		This allows expression of the conditional probability of getting a positive group through those same quantities
		$$
		\eta_{1}(\ba;\bp) := P[Y^{*}=1|\bA=\ba ; \bp] = 1 - \eta_0(\ba, \bp).
		$$ Consequently, we can write the probability of getting a negative group test as
		\begin{align*}
			\widetilde{\eta}_0(\ba;\bp) := &P[\widetilde{Y}^{*}=0|\bA=\ba; \bp] \\
			=&P[Y^{*}=0|\bA=\ba; \bp]S_{p} +  P[Y^{*}=1|\bA=\ba; \bp](1-S_{e})    \\
			= &\eta_0(\ba; \bp)S_p + \eta_1(\ba; \bp)(1-S_e) \label{eq:etatilde0}
		\end{align*}
		
		Next, we define the individual retest specificity and sensitivity as
		\begin{align*}
			&S_{p}(\tilde{y}_{j}):= 1\{\tilde{y}_{j} = 0\}S_p + 1\{\tilde{y}_{j} = 1\}(1 -S_p), \\
			&S_{e}(\tilde{y}_{j}):=1\{\tilde{y}_{j} = 0\}(1 - S_e) + 1\{\tilde{y}_{j} = 1\}S_e
		\end{align*} and consider the following probabilities for the retesting phase
		\begin{align*}
			&\h2s \phi_0(\widetilde{\by}, \ba; \bp) := P[ \widetilde{\bY}=\widetilde{\by}, Y^\ast = 0 |  \bA=\ba; \bp] \\
			= &P\Big[Y_{1} = 0,\cdots, Y_{N} = 0, \widetilde{\bY}=\widetilde{\by} \big|  \bA=\ba, \bp\Big] = \prod_{j = 1}^{N}P\Big[Y_{j} = 0, \widetilde{Y}_{j}=\tilde{y}_j \big| A_{j}=a_{j}; \bp\Big] \\
			= &\prod_{j = 1}^{N}P\Big[ \widetilde{Y}_{j}=\tilde{y}_j |
			Y_{j} = 0, A_{j}=a_{j}; \bp\Big] 
			P\Big[ Y_{j} = 0 | A_{j}=a_{j}; \bp\Big]\\
			= &\prod_{j = 1}^{N}P\Big[\widetilde{Y}_{j} =\tilde{y}_j| Y_{j} = 0\Big]P\Big[Y_{j} = 0 \big| A_{j}=a_{j}; \bp\Big] =  \prod_{j = 1}^{N}S_{p}(\tilde{y}_{j}) (1-p_{a_{j}}), \text{ and } \\
			&\h2s \Phi(\widetilde{\by}, \ba;\bp):= P[\widetilde{\bY}=\widetilde{\by} |  \bA=\ba; \bp] 
			= \prod_{j = 1}^{N}P\Big[\widetilde{Y}_{j} = \tilde{y}_{j}\big| A_{j}=a_{j}; \bp\Big] \\
			= &\prod_{j = 1}^{N}\Bigg\{\sum_{y = 0,1}P\Big[\widetilde{Y}_{j}=\tilde{y}_{j} \big| Y_{j} = y\Big]P\Big[Y_{j} = y \big| A_{j}=a_{j}; \bp\Big] \Bigg\}\\
			= &\prod_{j = 1}^{N}\Bigg\{ \Big(1\{\tilde{y}_{j} = 0\}S_p + 1\{\tilde{y}_{j} = 1\}(1 -S_p)\Big) (1-p_{a_{j}}) \\
			&\h2s\h2s + \Big(1\{\tilde{y}_{j} = 0\}(1 - S_e) + 1\{\tilde{y}_{j} = 1\}S_e\Big) p_{a_{j}}\Bigg\}\\
			= & \prod_{j = 1}^{N}\Big( S_{p} (\tilde{y}_{j})(1-p_{a_{j}}) + S_{e}(\tilde{y}_{j}) p_{a_{j}}\Big).
		\end{align*} The above also gives
		$$
		\phi_1(\widetilde{\by},\ba;\bp):=P[\widetilde{\bY}=\widetilde{\by}, Y^\ast = 1 |  \bA=\ba;\bp] = \Phi(\widetilde{\by},\ba;\bp) - \phi_0(\widetilde{\by},\ba; \bp)$$ which allows us to express the probability of getting a positive group test with retesting, in terms of quantities that depend on the observed data and the parameter:
		\begin{align*}
			\widetilde{\phi}_1(\widetilde{\by},\ba;\bp) := &P[\widetilde{Y}^{*}=1, \widetilde{\bY} = \widetilde{\by}|\bA = \ba; \bp] = \sum_{y = 0, 1} P[\widetilde{\bY}, \widetilde{Y}^{*}=1, Y^\ast = y|  \bA=\ba; \bp] \\
			=&\sum_{y = 0, 1} P[Y^\ast = y, \widetilde{\bY} =\widetilde{\by} |  \bA_i=\ba_i; \bp] P[\widetilde{Y}^{*}=1 | Y^\ast = y, \widetilde{\bY} = \widetilde{\by}, \bA=\ba; \bp] \\
			=&\phi_0(\widetilde{\by}, \ba; \bp) (1 - S_p) + \phi_1(\widetilde{\by}, \ba; \bp)S_e 
		\end{align*}
		
		Finally, we can express the likelihood in terms of quantities depending only on the observed data $\bD^{(0)} = \{(\widetilde{Z}_i, \bA_i), i = 1, \ldots, M\}$ and the parameter $\bp$:
		\begin{align*}
			&L(\bp|\bD^{(0)})= \prod_{i=1}^{M}\Big[P(\widetilde{Y}^{*}_i=0|\bA_i; \bp)
			\Big]^{1-\widetilde{Y}^{*}_{i}} \Big[P(\widetilde{Y}_i=1, \widetilde{\bY}_i|\bA_i, \bp)
			\Big]^{\widetilde{Y}^{*}_{i}} \\
			=&\prod_{i=1}^{M}\Big[\widetilde{\eta}_{0}(\bA_i;\bp)\Big]^{1-\widetilde{Y}^{*}_{i}}\Big[\widetilde{\phi}_1(\widetilde{\bY}_i,\bA_i;\bp)\Big]^{\widetilde{Y}^{*}_i}.
		\end{align*}

		\section{Potential Outcome Densities in Observational Studies} \label{os_derive}
		
		In the case of Observational Studies, we need to consider the baseline covariates. Thus, the assumptions for the pool-level and individual-level diagnostic accuracies now take the forms $\widetilde{Y}^{*} \independent \bA, \mb{X}, \widetilde{\bY} \ | \ Y^{*}$ and $\widetilde{Y}_j \independent A_j, \bX_j \ | \ Y_j$ for $j=1 , \ldots, N$.
        
        Now we demonstrate that the quantities from  \ref{rct_likelihood} are related to the densities of the potential outcomes instead of the conditional densities of outcomes as follows:
		\begin{align*}
			&\h2s E_{\mb{X}}\Big[P(Y^\ast = 0 | \bA = \ba, \mb{X}; \bp)\Big] = E_{\mb{X}}\Big[\prod_{j = 1}^{N}P(Y_{j} = 0 | A_{j} = a_{j}, \bX_{j}; \bp)\Big] \\
			&= \prod_{j = 1}^{N}E_{\bX_{j}}\Big[P(Y_{j} = 0 | A_{j} = a_{j}, \bX_{j}; \bp)\Big] = \prod_{j = 1}^{N} \pi_{Y^{(a_{j})}}(0; \bp) = \prod_{j = 1}^{N} (1 - p_{a_{j}}) = \eta_0(\ba, \bp)
		\end{align*} 
		and thus $E_{\mb{X}}\Big[P(\widetilde{Y}^{*}= 0 | \bA =  \ba, \mb{X}; \bp)\Big] = \widetilde{\eta}_{0}(\ba, \bp)$. 
		
		A similar derivation now shows that $E_{\mb{X}}\Big[P(\widetilde{Y}^{*} = 1, \widetilde{\bY} =\widetilde{\by} | \bA =  \ba, \mb{X}; \bp)\Big] = \widetilde{\phi}_1(\widetilde{\by}, \ba; \bp)$. Therefore, 
		\begin{align*}
			&\pi_{\widetilde{Z}^{(\ba)}}(\tilde{z}; \bp) = E_{\mb{X}}[P(\widetilde{Z}=\tilde{z} | \bA = \ba, \mb{X}; \bp)] \\
			= &E_{\mb{X}}\Big[1\{\tilde{y}^{*} = 0\}P(\widetilde{Y}^{*} = 0 | \bA = \ba, \mb{X}; \bp) + 1\{\tilde{y}^{*} = 1\}P(\widetilde{Y}^{*}= 1, \widetilde{\bY} =\widetilde{\by}  | \bA = \ba, \mb{X}; \bp)\Big]\\
			= &1\{\tilde{y}^{*} = 0\}E_{\mb{X}}\Big[P(\widetilde{Y}^{*} = 0 | \bA = \ba, \mb{X}; \bp)\Big] + 1\{\tilde{y}^{*} = 1\}E_{\mb{X}}\Big[P(\widetilde{Y}^{*} = 1, \widetilde{\bY}=\widetilde{\by} | \bA = \ba, \mb{X}; \bp)\Big] \\
			= &\Big[\widetilde{\eta}_{0}(\ba;\bp)\Big]^{1-\tilde{y}^{*}}\Big[\widetilde{\phi}_1(\widetilde{\by}, \ba; \bp)\Big]^{\tilde{y}^{*}}
		\end{align*}
		
		\section{Theoretical Results} \label{theory}
		
		We impose the following regularity conditions to facilitate our technical derivations:
		\begin{enumerate}[({A}1)]
			\item \label{Aex} \textit{Positivity}: $\pi_{a}^{\ast}(\bx)=P(A=a|\bX=\bx)>\nu$ for all values of $\bx$ that $f_{\bX}(\bx)>0$, where $f_{\bX}(\cdot)$ is the density of $\bX$ and 
			$0 < \nu < 1/2$ is some constant.
			\item \label{Abd} \textit{Bounds for $\bp^{*}$}: $0 < \nu \leq \min\{p_0^{*},p_1^{*}\} \leq \max\{p_0^{*},p_1^{*}\} \leq 1-\nu$.
			\item \label{Agp} \textit{Group size bound}: The group sizes are bounded $N_i<1/\nu$ for all $i = 1, \ldots, M$.
            \item \label{Apd} \textit{Full gradient span}: $\bG_{\bp^\ast, i} := \{\nabla_{\bp} \pi_{\widetilde{Z}_i^{(\ba_i)}}(\tilde{z}_i; \bp^\ast); \tilde{z}_i \in \mc{Z}_i, \ba_i \in \mc{A}^{N_i}\}$ has full span,
			for all $i = 1, \ldots, M$.

		\end{enumerate}
		
		\begin{remark}
			Condition (A\ref{Aex}) has been widely assumed in observational studies \citep[see, e.g.,][]{hernan2020causal,yan2019estimation}. To avoid the singularity issue around 0, Condition (A\ref{Abd}) assumes that $\bp^{*}$ is bounded away from 0. The assumption that $\bp^{*}$ is bounded below from $1/2$ is also reasonable, as there is no benefit of conducting group testing if $\bp^{*}$ is close to $1/2$. Condition (A\ref{Agp}) is necessary to ensure proper asymptotics of our method. It is reasonable as in practice group sizes are small relative to sample size.
			Condition (A\ref{Apd}) is needed for the positive definiteness of $-E[\dot{\bU}(\bp^\ast)]$, which ensures the monotonicity of the estimation equations $\bU(\bp)=0$ \citep{fygenson1994monotone} in a neighborhood around $\bp^\ast$. It is not a restrictive condition. Indeed, since the number of gradients in $\bG_{\bp^\ast, i}$ far exceed their dimension $2$, this condition is guaranteed for all but some exceptionally degenerate families of potential outcome distributions.

			
		\end{remark}
		
		\subsection{Theorems}
		

		\begin{proof}[Proof of Theorem \ref{thm1}] Without loss of generality, we assume that the group size is uniform $N_i = N$ for all $i = 1, \ldots, M$. This is a widely-used simplifying assumption employed for developing many foundational models of group testing \cite[see, e.g.,][]{dorfman1943detection}. It's main simplification is ensuring identical distribution of the observed samples, which allows use of the standard central limit theorem below.

        (i) We prove the result by invoking Theorem 3.2 in \cite{crowder1986consistency}.  Thus, we need to check the conditions (i) and (ii) of Theorem 3.2 in \cite{crowder1986consistency}.
		
		By Lemma \ref{lemma2}, 
		$\mc{F}_{b},b=0,1$'s and $\mc{G}_{bc},b,c=0,1$'s are Donsker, and hence
		Glivenko-Cantelli. Thus, 
		\begin{equation}
			\sup_{\bp\in [\nu, 1-\nu]^2}\max\left\{ \left\|\bU(\bp) - E[\bU(\bp)]\right\|,\left\|
			\dot{\bU}(\bp)-E\left[ \dot{\bU}(\bp) \right]\right\|\right\}\rightarrow_{p}0.  \label{DGC}
		\end{equation}
		Thus, condition (ii) of Theorem 3.2 in \cite{crowder1986consistency} is satisfied.

		Let $\partial B_{r}(\bp^\ast):=\{\bp \in [\nu, 1-\nu]^2: \|\bp-\bp^{*}\|=r \}$ be the boundary of the circle $B_{r}(\bp^\ast):=\{\bp \in [\nu, 1-\nu]^2: \|\bp-\bp^{*}\|\leq r \}$ for some $0 <r \leq R_+$, where $R_+$ is the fixed radius, independent of $M$, from Corollary \ref{corollary4}.  Given a $\bp\in \partial B_{r}$,  consider  $(\bp-\bp^{*})^{\rT}E[\bU(\bp)]$. Since $E[\bU(\bp^{*})]=0$ by Lemma \ref{lemma3}, by the vector valued mean value theorem \citep{mcleod1965mean,furi1991mean}, 
        \begin{align}
			&(\bp-\bp^{\ast})^{\rT}E[\bU(\bp)]=(\bp-\bp^{\ast})^{\rT}(E[\bU(\bp)]-E[\bU(\bp^{\ast})])\nonumber\\
			= &(\bp-\bp^{*})^{\rT}\left(\lambda_1E\left[\dot{\bU}(\bp_1)\right]
			+\lambda_2E\left[\dot{\bU}(\bp_2)\right]\right)(\bp-\bp^{*}) 
			\label{smvt}
		\end{align}
		where $\lambda_1, \lambda_2>0, \lambda_1+\lambda_2=1$, $\bp_1=t_{1}\bp+(1-t_1)\bp^{\ast}$, and
		$\bp_2=t_{2}\bp+(1-t_2)\bp^{\ast}$, for some $0<t_1,t_2<1$.
		
		Under Condition (A\ref{Apd}), Corollary \ref{corollary4} to Lemma \ref{lemma4} implies that the eigenvalues of $-E[\dot{\bU}(\bp)]$ are greater than $\Lambda$ for some $\Lambda>0$, uniformly over $\bp\in [\nu, 1-\nu]^2$.  Therefore, combining \eqref{smvt} and the fact above yield $
		(\bp-\bp^{*})^{\rT}(-E[\bU(\bp)])\geq \Lambda\|\bp-\bp^{*}\|^2= \Lambda r^2
		$. Thus, condition (ii) of Theorem 3.2 in \cite{crowder1986consistency} is satisfied as well.
		
		By Theorem  3.2 in \cite{crowder1986consistency}, there exists a sequence $\tilde{\bp}$ such that $\tilde{\bp}\rightarrow_{p} \bp^{*}$. 
		
		\noindent (ii)  By the vector valued mean value theorem \citep{mcleod1965mean,furi1991mean} again, 
		$$ 0=\bU(\tilde{\bp})=\bU(\bp^{*})+\left(\lambda_1 \dot{\bU}(\tilde{\bp}_1)
		+\lambda_2\dot{\bU}(\tilde{\bp}_2)\right)(\tilde{\bp}-\bp^{*}), 
		$$
		where $\lambda_1, \lambda_2>0, \lambda_1+\lambda_2=1$, $\tilde{\bp}_1=t_{1}\tilde{\bp}+(1-t_1)\bp^{*}$, and
		$\tilde{\bp}_2=t_{2}\tilde{\bp}+(1-t_2)\bp^{*}$, for some $0<t_1,t_2<1$. 
		Thus, $$\sqrt{M}(\tilde{\bp}-\bp^{*})=-\left(\lambda_1 \dot{\bU}(\tilde{\bp}_1)
		+\lambda_2\dot{\bU}(\tilde{\bp}_2)\right)^{-1}\sqrt{M}\bU(\bp^{*}).$$
		By central limit theorem,  
		$\sqrt{M}\bU(\bp^{*})\rightarrow_{d} N(0, \bUS)$, where
		$\bUS = M E\left[\bU(\bp^{*})\bU(\bp^{*})^{\rT}\right]$. Because  $\tilde{\bp}\rightarrow_{p} \bp^{*}$ by Part (i) above, applying the continuous mapping theorem gives $\left(\lambda_1 \dot{\bU}(\tilde{\bp}_1)
		+\lambda_2\dot{\bU}(\tilde{\bp}_2)\right)^{-1}$ $\rightarrow_{p}$ $\bV$ where $\bV = E[ \dot{\bU}(\bp^{*})]^{-1}$. Then by Slutsky's theorem,
		$$
		\sqrt{M}(\tilde{\bp}-\bp^{*})\rightarrow_{d} N\Big(0, 
		\bV \bUS \bV^{\rT}\Big).
		$$
        
        This completes the proof of Theorem \ref{thm1}.

        \end{proof}

		\begin{proof}[Proof of Theorem \ref{cor1}] Consider
		\begin{align*}
			\widehat{U}_b(\bp) - U_b(\bp) = \frac{1}{M} \sum_{i = 1}^M(\widehat{W}(\bA_i, \mb{X}_i) - W(\bA_i, \mb{X}_i)) \partial_{p_b}\ell(\bp|\widetilde{Z}_i, \bA_i) 
		\end{align*} For the first step, we seek to show that $\sup_{\bp \in [\nu,1-\nu]^2}|\widehat{U}_{b}(\bp)- U_{b}(\bp)|=o_{p}(1)$. Since $\|\widehat{\pi}_1-\pi_1^{*}\|_{\infty}=o_{p}(1)$ already implies
		$\max_{1\leq i\leq M} \|\widehat{W}_{i}- W_{i}\|=o_{p}(1)$, this can be achieved if $\partial_{p_b}\ell(\bp|\widetilde{Z}_i, \bA_i) = O_p(1)$.  As
		\begin{align*}
			\partial_{p_b}\ell(\bp|\widetilde{Z}_i, \bA_i) =  \left[ (1-\widetilde{Y}^{*}_i)\frac{\partial_{p_b}\widetilde{\eta}_0(\bA_i;\bp)}{\widetilde{\eta}_0(\bA_i;\bp)} +\widetilde{Y}^{*}_i \frac{\partial_{p_b}\widetilde{\phi}_1(\widetilde{\bY}_i, \bA_i; \bp)}{\widetilde{\phi}_1(\widetilde{\bY}_i, \bA_i; \bp)} \right]
		\end{align*} it suffices to show that the denominator terms $\widetilde{\eta}_0(\bA_i;\bp), \widetilde{\phi}_1(\widetilde{\bY}_i, \bA_i; \bp)$ are uniformly bounded away from $0$ and indeed this is guaranteed by Lemma \ref{lemma1}. Thus  
		\begin{align*}
			\sup_{\bp \in [\nu,1-\nu]^2}|\widehat{U}_{b}(\bp)- U_{b}(\bp)|=o_{p}(1) 
		\end{align*} as desired.
		
		Since both $U_{0}(\bp)$ and $\widehat{U}_{0}(\bp)$ are bounded, 
		by dominated convergence theorem \citep[see, e.g.,][]{Williams1991Prob},  $\sup_{\bp \in [\nu,1-\nu]^2}E[ | \widehat{U}_{0}(\bp)-U_0(\bp)|]\rightarrow 0$. Similarly,   $\sup_{\bp \in [\nu,1-\nu]^2}E[| \widehat{U}_{1}(\bp)-U_{1}(\bp)|]\rightarrow 0$.  Consequently, we obtain that
		\begin{equation}
			\sup_{\bp \in [\nu,1-\nu]^2}E[ \| \widehat{\bU}(\bp)-\bU(\bp)\|]\rightarrow 0. \label{cor1eq2}
		\end{equation}
		
		Thus, letting $\partial B_{r}(\bp^\ast):=\{\bp \in [\nu, 1-\nu]^2: \|\bp-\bp^{*}\|=r \}$ be the boundary of the circle $B_{r}(\bp^\ast):=\{\bp \in [\nu, 1-\nu]^2: \|\bp-\bp^{*}\|\leq r \}$ for some $0 <r \leq R_+$, where $R_+$ is the radius from Corollary \ref{corollary4}, 
		\begin{align*}
			&\h2s \inf_{\bp \in \partial B(r)}
			(\bp-\bp^{*})^{\rT}(-E[\widehat{\bU}(\bp)]) \\
			&\geq
			\inf_{\bp \in \partial B(r)}
			(\bp-\bp^{*})^{\rT}(-E[\bU(\bp)])-\sup_{\bp \in \partial B(r)}\left\|
			\bp-\bp^{*}
			\right\|\|E[\bU(\bp)]-E[\widehat{\bU}(\bp)] \| \\ & \geq \Lambda\|\bp-\bp^{*}\|^2.
		\end{align*}
		where we get the last inequality because under Condition (A\ref{Apd}), Corollary \ref{corollary4} to Lemma \ref{lemma4} implies that the eigenvalues of $-E[\dot{\bU}(\bp)]$ are greater than $\Lambda$ for some $\Lambda>0$, uniformly over $\bp\in [\nu, 1-\nu]^2$. Thus, condition (i) of Theorem 3.2 in \cite{crowder1986consistency} is satisfied.
		
		Moreover,  by \ref{thm1}, and \eqref{cor1eq2},
		\begin{align*}
			&\h2s \sup_{\bp \in \partial B(r)}\left\| 
			\widehat{\bU}(\bp)-E\left[\widehat{\bU}(\bp)\right]
			\right\|\\
			&\leq \sup_{\bp \in \partial B(r)}\left\| 
			\widehat{\bU}(\bp)-\bU(\bp)
			\right\| + \sup_{\bp \in \partial B(r)}\left\| 
			\bU(\bp)-E\left[\bU(\bp)\right]
			\right\| \\ & \h2s\h2s + \sup_{\bp \in \partial B(r)}\left\| 
			E\left[\bU(\bp)\right]-E\left[\widehat{\bU}(\bp)\right]
			\right\|\rightarrow_{p} 0.
		\end{align*}
		Thus, condition (ii) of Theorem 3.2 in \cite{crowder1986consistency} is satisfied. Therefore,  there exists a sequence $\widehat{\bp}$ such that $\widehat{\bp}\rightarrow_{p} \bp^{*}$. This completes the proof of Theorem \ref{cor1}.
		    
		\end{proof}

		\begin{proof}[Proof of Theorem \ref{thm2}] We aim to verify that our case satisfies conditions (2.1) - (2.6) of Theorem 2 of \cite{chen2003estimation}.
		Let $\bU(\bp, \bUpi)=(U_{0}(\bp, \bUpi), U_{1}(\bp, \bUpi))^{\rT}$, where
		\begin{align*}
			&U_{b}(\bp,\bUpi):=\mb{E}_{M}\Bigg[
			\prod_{k=1}^{N_i}\left(
			\sum_{a=0}^{1}1\{A_{ik}=a\}\pi_{a}^{-1}(\bX_{ik})\right) \left\{ 
			\partial_{p_b}\ell(\bp|\widetilde{Z}_i, \bA_i) \right\}\Bigg]
		\end{align*}
		Thus,  $\bU(\bp, \bUpi^{*})=\bU(\bp)$ and $\bU(\bp, \widehat{\bUpi})=\widehat{\bU}(\bp)$. 
		
		It is easy to see that conditions (2.1) and (2.4) are satisfied.  Since $\Gamma_{1}(\bp,\bUpi^{*})=E[\dot{\bU}(\bp)]$ exists for $\bp\in B_{r}(\bp^\ast)$ and is continuous and full rank at $\bp=\bp^{*}$, condition (2.2) is also satisfied.

		Next, we check condition (2.3). For all $\bp\in B_{r}(\bp^\ast)$, we define the function derivative of $E[\bU(\bp,\bUpi)]$ at $\bUpi^{*}$ in the direction of $[\bUpi-\bUpi^{*}]$ as 
		\begin{align*}
			&\h2s \Gamma_{2}(\bp, \bUpi^{*})[\bUpi-\bUpi^{*}] = \lim_{\tau\rightarrow 0}\frac{1}{\tau}E\Bigg[
			\Bigg(
			\prod_{k=1}^{N_i}\left(
			\sum_{a=0}^{1}1\{A_{ik}=a\}(\pi_{a}^{*}+\tau(\pi_{a}-\pi_{a}^{*}))^{-1}(\bX_{ik})\right) \\
			&\hspace{0.5in}  -\prod_{k=1}^{N_i}\left(
			\sum_{a=0}^{1}1\{A_{ik}=a\}\pi_{a}^{*-1}(\bX_{ik})\right)  
			\Bigg)\times
			\left\{ 
			\partial_{p_b}\ell(\bp|\widetilde{Z}_i, \bA_i) \right\}
			\Bigg]\\
			&=E\Bigg[
			h(D_i, \bUpi)
			\left\{ 
			\partial_{p_b}\ell(\bp|\widetilde{Z}_i, \bA_i)\right\}
			\Bigg], \text{ where}
			\\
			& h(D_i, \bUpi) = \sum_{k=1}^{N_{i}}\Bigg[\Bigg(\sum_{a=0}^{1}1\{A_{ik}=a\} \frac{(\pi_{a}^{*}-\pi_{a})}{\pi_{a}^{*2}}(\bX_{ik})\Bigg)
			\prod_{l\neq k}\left(
			\sum_{a=0}^{1}1\{A_{il}=a\}\pi_{a}^{*-1}(\bX_{il})\right) 
			\Bigg].
		\end{align*}
		$\Gamma_{2}(\bp, \bUpi^{*})[\bUpi-\bUpi^{*}]$ exists for all $\bUpi$ satisfying $\mc{H}_{\delta}:= \{\pi_1:\|\pi_1-\pi_1^{*}\|_{\infty} \leq\delta\}$. It is easy to see that $\|h(D_i, \bUpi)\|_{\infty}=O(\delta)$.
		Following the routine Taylor expansion, we can show that 
		$$
		\left\|E\left[\bU(\bp, \bUpi)\right]-E\left[\bU(\bp, \bUpi^{*})\right]-\Gamma_{2}(\bp, \bUpi^{*})[\bUpi-\bUpi^{*}] \right\|= O(\|\pi_1-\pi^{*}_1\|_{\infty}^2)
		$$
		and  if $\|\bp-\bp^{*}\|\leq \delta$,
		$\|\Gamma_{2}(\bp, \bUpi^{*})[\bUpi-\bUpi^{*}]-\Gamma_{2}(\bp^{*}, \bUpi^{*})[\bUpi-\bUpi^{*}]\|=
		O(\delta^2)$.
		Thus, condition (2.3) is satisfied.

		We  follow Theorem 3 in \cite{chen2003estimation} to verify condition (2.5'). 
		Let 
		\begin{align*}
			& m_{c}(D_i,\bp,\bUpi)= \left(\prod_{k=1}^{N_i}
			\sum_{a=0}^{1}1\{A_{ik}=a\}\pi_{a}^{-1}(\bX_{ik})\right) \times\left\{ \partial_{p_c}\ell(\bp|\widetilde{Z}_i, \bA_i) \right\},
			\h2s c=0,1.
		\end{align*}
		By Condition (A\ref{Agp}),
		\begin{align*}
			&\h2s    \left|m_{c}(D_i,\bp_1,\bUpi_1) -  m_{c}(D_i,\bp_2,\bUpi_2)\right|\\
			&\leq \left|m_{c}(D_i,\bp_1,\bUpi_1) -  m_{c}(D_i,\bp_1,\bUpi_2)\right|+
			\left|m_{c}(D_i,\bp_1,\bUpi_2) -  m_{c}(D_i,\bp_2,\bUpi_2)\right|\\
			&=\Bigg|
			\left(
			\left(\prod_{k=1}^{N_i}
			\sum_{a=0}^{1}1\{A_{ik}=a\}\pi_{1a}^{-1}(\bX_{ik})\right)
			-\left(\prod_{k=1}^{N_i}
			\sum_{a=0}^{1}1\{A_{ik}=a\}\pi_{2a}^{-1}(\bX_{ik})\right)    
			\right)\\
			&\hspace{1in}\times 
			\left\{ \partial_{p_c}\ell(\bp_1 |\widetilde{Z}_i, \bA_i) \right\}
			\Bigg|\\
			&\h2s+\Bigg|
			\left(\prod_{k=1}^{N_i}
			\sum_{a=0}^{1}1\{A_{ik}=a\}\pi_{2a}^{-1}(\bX_{ik})\right) \left\{ \partial_{p_c}\ell(\bp_1 |\widetilde{Z}_i, \bA_i) - \partial_{p_c}\ell(\bp_2 |\widetilde{Z}_i, \bA_i) \right\} \Bigg|\\
			&\leq  CN_i 2^{N_i}(\|\pi_{11}-\pi_{21}\|_{\infty}+\|\bp_1-\bp_2\|),
		\end{align*}
		for some constant $C$, where to get the last inequality we leverage the Lipschitz continuity of $\partial_{p_c}\ell(\cdot|\widetilde{Z}_i, \bA_i)$. Thus, condition (3.1) in \cite{chen2003estimation} is satisfied. Conditions (3.2) and (3.3)
		in \cite{chen2003estimation} hold by choosing $m_{lc}=0$ and Condition (2) of Theorem \ref{thm2}.
		Then condition (2.5') hold by Theorem 3.



		Now we check the condition (2.6').
		\begin{align*}
			&\h2s \Gamma_{2}(\bp, \bUpi^{*})[\widehat{\bUpi}-\bUpi^{*}] \\
			&=E\Bigg[
			\sum_{k=1}^{N}\Bigg[\Bigg(\sum_{a=0}^{1}1\{A_{k}=a\} \frac{(\pi_{a}^{*}-\widehat{\pi}_{a})}{\pi_{a}^{*2}}(\bX_{k})\Bigg)
			\prod_{l\neq k}\left(
			\sum_{a=0}^{1}1\{A_{l}=a\}\pi_{a}^{*-1}(\bX_{l})\right) 
			\Bigg]\\
			&\hspace{1in}\times    \left\{ 
			\partial_{p_c}\ell(\bp|\widetilde{Z}, \bA) \right\}
			\Bigg]\\
			&=E\Bigg[
			\sum_{k=1}^{N}\Bigg[\Bigg(
			1\{A_{k}=0\} \frac{\mb{E}_{M_{t}}\phi(\bX_{ij},\bX_{k})}{\pi_{0}^{*2}(\bX_{k})} -
			1\{A_{k}=1\} \frac{\mb{E}_{M_{t}}\phi(\bX_{ij},\bX_{k})}{\pi_{1}^{*2}(\bX_{k})}\Bigg)
			\\
			&\hspace{0.5in}\times   \prod_{l\neq k}\left(
			\sum_{a=0}^{1}1\{A_{l}=a\}\pi_{a}^{*-1}(\bX_{l})\right) 
			\Bigg] \left\{ 
			\partial_{p_c}\ell(\bp|\widetilde{Z}, \bA) \right\}
			\Bigg]\\
			& \hspace{1in}+o_{p}(M_{t}^{-1/2}) \\
			&=\mb{E}_{M_{t}}\Bigg\{E\Bigg[
			\sum_{k=1}^{N}\Bigg[\Bigg(
			1\{A_{k}=0\} \frac{\phi(\bX_{ij},\bX_{k})}{\pi_{0}^{*2}(\bX_{k})} -
			1\{A_{k}=1\} \frac{\phi(\bX_{ij},\bX_{k})}{\pi_{1}^{*2}(\bX_{k})}\Bigg)
			\\
			&\hspace{0.5in}\times   \prod_{l\neq k}\left(
			\sum_{a=0}^{1}1\{A_{l}=a\}\pi_{a}^{*-1}(\bX_{l})\right) 
			\Bigg] \left\{ 
			\partial_{p_c}\ell(\bp|\widetilde{Z}, \bA) \right\}
			\Bigg]\Bigg\} +o_{p}(M_{t}^{-1/2}) \\
			&=: \mb{E}_{M_t}(\psi(X_{ij}))+o_{p}(M_{t}^{-1/2}).
		\end{align*}
		We then obtain that 
		\begin{align*}
			&\h2s E[\psi(X_{ij})]\\
			&= E\Bigg[
			\sum_{k=1}^{N}\Bigg[\Bigg(
			-1\{A_{k}=1\} \frac{E\left[\phi(\bX_{ij},\bX_{k})|\bX_{k}\right]}{\pi_{1}^{*2}(\bX_{k})}
			+1\{A_{k}=0\} \frac{E\left[\phi(\bX_{ij},\bX_{k})|\bX_{k}\right]}{\pi_{0}^{*2}(\bX_{k})} \Bigg)
			\\
			&\hspace{0.2in}\times   \prod_{l\neq k}\left(
			\sum_{a=0}^{1}1\{A_{l}=a\}\pi_{a}^{*-1}(\bX_{l})\right) 
			\Bigg] \left\{ 
			\partial_{p_c}\ell(\bp|\widetilde{Z}, \bA) \right\}
			\Bigg]=0,
		\end{align*}
		since $E\left[\phi(\bX_{ij},\bX_{k})|\bX_{k}\right]=0$ as in Condition (1). Obviously,
		$E[\psi^2(X_{ij})]<\infty$. Therefore, condition (2.6') and consequently condition (2.6) in \cite{chen2003estimation} hold.  
		
		Applying Theorem 2 in \cite{chen2003estimation} yields that $\sqrt{M}(\widehat{\bp}-\bp^{*})\rightarrow_{d} N(0, \bUGamma)$. This completes the proof of Theorem \ref{thm2}.
		    
		\end{proof}

		\subsection{Lemmas}

        \begin{lemma}\label{lemma1}
			Under Condition (A\ref{Aex}), the terms
			$W(\ba, \mb{x}), \widetilde{\eta}_0(\ba;\bp), \widetilde{\phi}_1(\widetilde{\by}, \ba; \bp)$ are uniformly bounded and bounded away from $0$ for all $\widetilde{\by} \in \mc{Y}^N, \ba \in \mc{A}^N, \mb{x} \in \mc{X}^N$ and all $\bp \in [\nu, 1-\nu]^2$. Thus, for each fixed $\tilde{z}$ and $\ba$, the function $
            \ell(\bp|\tilde{z}, \ba) := (1 - \tilde{y}^{*})\log(\widetilde{\eta}_0(\ba;\bp)) + \tilde{y}^{*} \log(\widetilde{\phi}_1(\widetilde{\by}, \ba; \bp))
            $ is smooth for all $\bp \in [\nu, 1-\nu]^2$.
		\end{lemma}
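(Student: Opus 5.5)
The plan is to bound each of the three quantities separately by elementary products, using Condition (A\ref{Aex}) for the weight, the restriction $\bp \in [\nu,1-\nu]^2$ for the likelihood pieces, and Condition (A\ref{Agp}) to keep the group size $N$ finite. Smoothness of $\ell(\bp|\tilde z,\ba)$ will then follow because it is a logarithm of a polynomial in $\bp$ that stays positive.

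\emph{The weight.} For each $j$, the factor $\sum_{a}1\{a_j=a\}/\pi_a^\ast(\bx_j)$ equals $1/\pi^\ast_{a_j}(\bx_j)$. Since $\pi^\ast_{a_j}\le 1$, this factor is at least $1$. Since $\pi^\ast_{a_j}>\nu$ by (A\ref{Aex}), it is at most $1/\nu$. Hence $1\le W(\ba,\mb{x})\le \nu^{-N}<\nu^{-1/\nu}$, where the last step uses (A\ref{Agp}). This bound is uniform in $\mb{x}$ and $\ba$.

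\emph{The negative-pool probability.} Each factor $1-p_{a_j}$ lies in $[\nu,1-\nu]$, so $\eta_0(\ba;\bp)\in[\nu^N,(1-\nu)^N]$ and $\eta_1=1-\eta_0\in[1-(1-\nu)^N,\,1-\nu^N]$. Then $\widetilde\eta_0=\eta_0S_p+\eta_1(1-S_e)\ge S_p\nu^N>0$, and $\widetilde\eta_0\le 1$ because it is a probability. I will assume the assay is non-degenerate, so $S_e,S_p\in(0,1)$.

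\emph{The positive-pool probability.} Since $\Phi=\phi_0+\phi_1$,
\[
\widetilde\phi_1=\phi_0(1-S_p)+\phi_1S_e\ \ge\ \min\{1-S_p,S_e\}\,\Phi(\widetilde\by,\ba;\bp).
\]
Each factor of $\Phi$ is $S_p(\tilde y_j)(1-p_{a_j})+S_e(\tilde y_j)p_{a_j}$. This is a convex combination of two numbers taken from $\{S_p,1-S_e\}$ or $\{1-S_p,S_e\}$, so it is at least $\min\{S_p,S_e,1-S_p,1-S_e\}=:c>0$. Therefore $\widetilde\phi_1\ge \min\{1-S_p,S_e\}\,c^N$, which is bounded below uniformly because $N<1/\nu$. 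The upper bound $\widetilde\phi_1\le 1$ again holds because it is a probability.

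\emph{The main obstacle.} The only delicate point is that the lower bounds above need $S_e,S_p\in(0,1)$ strictly. If $S_p=1$, then $\phi_0(1-S_p)=0$, and the lower bound on $\widetilde\phi_1$ must come from $\phi_1S_e$ alone. In that case $\phi_1$ can be bounded below using $p_{a_j}\ge\nu$: since $\phi_1\ge P[Y_1=1,\widetilde{\bY}=\widetilde\by\mid\ba]$, one gets $\phi_1\ge \nu\, c'^N$, provided the individual-level misclassification probabilities for the observed $\tilde y_j$ are positive. I would state the non-degeneracy of $(S_e,S_p)$ explicitly as implicit in the setup.

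\emph{Smoothness.} With the bounds in hand, $\widetilde\eta_0$ and $\widetilde\phi_1$ are polynomials in $(p_0,p_1)$ bounded away from zero on the compact set $[\nu,1-\nu]^2$. Hence their logarithms are $C^\infty$ there, and $\ell(\bp|\tilde z,\ba)$ is smooth with all derivatives uniformly bounded over the finitely many pairs $(\tilde z,\ba)$.
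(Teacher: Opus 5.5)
Your proof is correct and follows the same route as the paper: the bound $1\le W\le\nu^{-N}$ from the product of inverse propensities, the bound $\widetilde{\eta}_0\ge S_p\nu^N$, a uniform lower bound on $\widetilde{\phi}_1$, and smoothness because $\widetilde{\eta}_0$ and $\widetilde{\phi}_1$ are polynomials in $\bp$ bounded away from zero. Your explicit estimate $\widetilde{\phi}_1\ge\min\{1-S_p,S_e\}\,c^N$ is uniform in $\widetilde{\by}$ directly, whereas the paper takes a minimum over the finite set $\mc{Y}^N$. Your caveat that $S_e,S_p\in(0,1)$ is a genuine hypothesis that the paper leaves implicit: with a perfect assay, $\widetilde{\phi}_1(\mathbf{0},\ba;\bp)=0$.
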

        
		\begin{proof} Applying Condition (A\ref{Aex}) to the IPW (3) easily gives
        $
        1 < W(\ba, \mb{x}) = \prod_{j=1}^{N}
	\pi_{a_j}^{\ast -1}(\bx_{j}) < \nu^{-N}
        $ uniformly. Next, clearly $0 \leq \widetilde{\eta}_0(\ba;\bp), \widetilde{\phi}_1(\widetilde{\by}, \ba; \bp) \leq 1$, being probabilities, so we only have to show they are uniformly bounded away from $0$. Noting $\widetilde{\eta}_0(\ba;\bp) = \eta_0(\ba; \bp)S_p + \eta_1(\ba; \bp)(1-S_e)$ from \ref{rct_likelihood}, where each of the summands are themselves positive probabilities, we see that $0 < S_p \nu^N 
            \leq S_p \prod_{j=1}^N (1-p_{a_j}) = S_p \eta_0(\ba; \bp) \leq \widetilde{\eta}_0(\ba;\bp)$ uniformly. Similarly for each fixed $\widetilde{\by} \in \mc{Y}^N$, we can derive a strict bound $0 < b(\widetilde{\by}, S_e, S_p, \nu) \leq  \widetilde{\phi}_1(\widetilde{\by}, \ba; \bp)$ away from $0$ that is uniform over all $\ba \in \mc{A}^N$ and all $\bp \in [\nu, 1-\nu]^2$, but still depends on $\widetilde{\by}$. To eliminate this dependence, note that $\mc{Y}^N$ is a finite set. Thus, we can set the uniform lower bound as $0 < b(S_e, S_p, \nu) = \min_{\widetilde{\by} \in \mc{Y}^N} b(\widetilde{\by}, S_e, S_p, \nu)$.

            Finally, for fixed $\tilde{z}$ and $\ba$, $\widetilde{\eta}_0(\ba;\bp), \widetilde{\phi}_1(\widetilde{\by}, \ba; \bp)$ are polynomials in $\bp$. So, the derivatives of $\ell(\bp|\tilde{z}, \ba)$ are rational functions whose denominators are some powers of $\widetilde{\eta}_0(\ba;\bp), \widetilde{\phi}_1(\widetilde{\by}, \ba; \bp)$. As a result, they exist everywhere because the denominators are bounded away from $0$.

		\end{proof}

		\begin{lemma}\label{lemma2}
        For $b,c=0,1$, define the parametrized function classes
		\begin{gather*}
			\mc{F}_{b} := \Bigg\{f_{b,\bp}: \mc{Z} \times \mc{A}^N \times \mc{X}^N \to \mb{R} \ \Big|\ f_{b,\bp}(\tilde{z}, \ba, \mb{x}) = W(\ba, \mb{x}) \partial_{p_b}\ell(\bp|\tilde{z}, \ba), \bp\in [\nu, 1-\nu]^2\Bigg\},   \\
			\mc{G}_{bc} := \Bigg\{g_{bc,\bp}: \mc{Z} \times \mc{A}^N \times \mc{X}^N \to \mb{R}  \ \Big | \ g_{bc,\bp}(\tilde{z}, \ba, \mb{x}) = W(\ba, \mb{x}) \partial^2_{p_c p_b}\ell(\bp|\tilde{z}, \ba), \bp\in [\nu, 1-\nu]^2 \Bigg\},
		\end{gather*} Under Condition (A\ref{Aex}),
			$\mc{F}_{b}, b=0,1$'s and $\mc{G}_{bc}, b,c=0,1$'s are Donsker classes.
		\end{lemma}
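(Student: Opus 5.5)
The plan is to show that $\mc{F}_b$ and $\mc{G}_{bc}$ are Lipschitz-in-parameter classes indexed by the compact set $[\nu,1-\nu]^2 \subset \mb{R}^2$, with a square-integrable (in fact bounded) envelope and Lipschitz constant. The Donsker property then follows from the standard bracketing bound for parametric classes (e.g., Example 19.7 of van der Vaart's \emph{Asymptotic Statistics}, or Theorem 2.7.11 of van der Vaart and Wellner). That bound gives $N_{[\,]}(2\epsilon\|F\|,\mc{F},L_2)\lesssim (\mathrm{diam}/\epsilon)^2$, so the bracketing integral is finite.

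First, by Lemma \ref{lemma1}, $W(\ba,\mb{x})\le \nu^{-N}$ uniformly, and $\widetilde{\eta}_0$, $\widetilde{\phi}_1$ are bounded below by a constant $b(S_e,S_p,\nu)>0$ uniformly in $\bp\in[\nu,1-\nu]^2$, $\ba$ and $\widetilde{\by}$. Since $\mc{Z}\times\mc{A}^N$ is a finite set, $\ell(\bp|\tilde z,\ba)$ is one of finitely many functions of $\bp$. Each is the logarithm of a polynomial in $\bp$ whose coefficients involve only $S_e$, $S_p$ and $N$, and which is bounded away from zero on the compact square. Hence every partial derivative of order up to three is a rational function with a nonvanishing denominator on the compact set. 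It is therefore continuous and bounded by a constant $C$ that does not depend on $(\tilde z,\ba)$, because we take the maximum over finitely many cases.

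Second, write $f_{b,\bp}=W\cdot\partial_{p_b}\ell(\bp|\cdot)$. Then $|f_{b,\bp}|\le \nu^{-N}C=:F$, a constant envelope. By the mean value theorem in $\bp$,
\[
|f_{b,\bp_1}(\tilde z,\ba,\mb{x})-f_{b,\bp_2}(\tilde z,\ba,\mb{x})|\le \nu^{-N}C\sqrt{2}\,\|\bp_1-\bp_2\|,
\]
with a Lipschitz constant that is likewise a constant function. The same argument applies to $g_{bc,\bp}$, using third derivatives of $\ell$ for the Lipschitz step. Parametric Lipschitz classes over a bounded subset of $\mb{R}^2$ with an $L_2(P)$ Lipschitz constant are Donsker, which gives the lemma.

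The only delicate step is obtaining derivative bounds that are uniform across all data values. Here that is immediate because the data $(\tilde z,\ba)$ enter $\ell$ only through a finite set, and $\mb{x}$ enters only through the bounded weight $W$. Positivity (A\ref{Aex}) controls $W$, and Lemma \ref{lemma1} controls the denominators. No entropy calculation on the covariate space is needed.
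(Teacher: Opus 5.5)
Your proposal is correct and follows the same route as the paper: a Lipschitz-in-parameter bound, bracketing numbers controlled by covering numbers of the compact box $[\nu,1-\nu]^2$ (the paper cites this bound as Theorem 2.7.17 of van der Vaart and Wellner), a finite bracketing integral, and then the bracketing Donsker theorem. Your observation that $(\tilde z,\ba)$ ranges over a finite set and $W\le\nu^{-N}$, so the Lipschitz function is a constant, makes explicit two points the paper leaves implicit: that $\|m\|_{Q,2}$ is finite, and that third derivatives are needed for $\mc{G}_{bc}$.
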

		
         \begin{proof} By Lemma \ref{lemma1}, for each fixed $\tilde{z}, \ba, \mb{x}$, $\nabla_{\bp} f_{b, \bp}(\tilde{z}, \ba, \mb{x})$ is continuous and bounded on the compact parameter box $[\nu, 1-\nu]^2$. So it is Lipschitz in the parameters viz
        $$
        |f_{b, \bp_1}(\tilde{z}, \ba, \mb{x}) - f_{b, \bp_2}(\tilde{z}, \ba, \mb{x})| \leq m(\tilde{z}, \ba, \mb{x}) \|\bp_1 - \bp_2\|
        $$ where $m(\tilde{z}, \ba, \mb{x}) = \sup_{\bp\in [\nu, 1-\nu]^2} \|\nabla_{\bp} f_{b, \bp}(\tilde{z}, \ba, \mb{x})\|$.
		
		Then by Theorem 2.7.17 of \cite{vaart2023weak},  for any probability measure $Q$, the bracketing numbers of the class $\mc{F}_{b}$ are bounded by the covering numbers of the compact parameter box $[\nu, 1-\nu]^2$
		\begin{align*}
			N_{[]}(2 \epsilon \|m\|_{Q,2}, \mc{F}_b, L_2(Q)) \leq N(\epsilon, [\nu, 1-\nu]^2, \|\cdot\|_2)
		\end{align*}
		for  $\epsilon > 0$. For the definitions of covering and bracketing numbers, we refer the reader to Definitions 2.1.5 and 2.1.6 on page 132 of \cite{vaart2023weak}. Now, since $N(\epsilon, [\nu, 1-\nu]^2, \|\cdot\|_2) = \begin{cases} \lceil K (1/\eps)^2 \rceil &\text{if } 0 < \epsilon < \text{diam}([\nu, 1-\nu]^2)  \\ 1 & \text{if } \text{diam}([\nu, 1-\nu]^2) < \epsilon < \infty  \end{cases} $ for some scaling constant $K > 0$
		\begin{align*}
			\int_{0}^{\infty}\sqrt{\log N_{[]}(2 \epsilon \|m\|_{Q,2}, \mc{F}_b, L_2(Q))}d\epsilon   \leq  &\int_{0}^{\infty} 
			\sqrt{\log N(\epsilon, [\nu, 1-\nu]^2, \|\cdot\|_2)}d\epsilon \\ = & \int_{0}^{\text{diam}([\nu, 1-\nu]^2)} 
			\sqrt{\log \left( \lceil K(1/\epsilon)^2 \rceil \right) }d\epsilon <\infty.
		\end{align*}
		So by Theorem 2.5.6 of \cite{vaart2023weak},
		$\mc{F}_{b}, b=0,1$'s are Donsker. Similarly, we can show that $\mc{G}_{bc}, b,c=0,1$'s are Donsker as well. This completes the proof of Lemma \ref{lemma2}.

        \end{proof}

		\begin{lemma}\label{lemma3}
			Under Condition  (A\ref{Aex}) and the  Rubin Causal Model assumptions (RCM1) -- (RCM2),
			$E[\bU(\bp^\ast; \bD)] = 0$.
		\end{lemma}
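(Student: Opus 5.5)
The plan is to reduce the claim to a single pool and then trade the inverse probability weight for the potential-outcome distribution. Since $\bU(\bp^\ast;\bD)=\frac{1}{M}\sum_{i=1}^M W(\bA_i,\mb{X}_i)\nabla\ell(\bp^\ast|\widetilde{Z}_i,\bA_i)$ and the pools are i.i.d., it suffices to show
\[
E\big[W(\bA,\mb{X})\nabla\ell(\bp^\ast|\widetilde{Z},\bA)\big]=0
\]
for a generic pool $D=(\widetilde{Z},\bA,\mb{X})$. Lemma \ref{lemma1} guarantees that $W$ is bounded by $\nu^{-N}$ and that $\nabla\ell(\bp^\ast|\tilde z,\ba)$ is bounded, so every expectation below is finite and we may interchange sums and expectations freely.

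First, decompose over treatment vectors. We write
\[
W(\bA,\mb{X})\nabla\ell(\bp^\ast|\widetilde{Z},\bA)=\sum_{\ba\in\mc{A}^N}\frac{1\{\bA=\ba\}}{\pi_{\bA}(\ba|\mb{X})}\nabla\log\pi_{\widetilde{Z}^{(\ba)}}(\widetilde{Z};\bp^\ast),
\]
and by group-level consistency replace $\widetilde{Z}$ by $\widetilde{Z}^{(\ba)}$ inside the $\ba$-th summand. Next, condition on $\mb{X}$. By group-level conditional exchangeability, $\widetilde{Z}^{(\ba)}\independent\bA\mid\mb{X}$, so the conditional expectation of the $\ba$-th term factorizes as
\[
\frac{P(\bA=\ba\mid\mb{X})}{\pi_{\bA}(\ba|\mb{X})}\,E\big[\nabla\log\pi_{\widetilde{Z}^{(\ba)}}(\widetilde{Z}^{(\ba)};\bp^\ast)\mid\mb{X}\big].
\]
The prefactor equals $1$ by the product form of the pool-level propensity in \eqref{GT_IPW}, whose denominator is positive by (A\ref{Aex}). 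Taking the outer expectation over $\mb{X}$ then reduces the $\ba$-th term to $E[\nabla\log\pi_{\widetilde{Z}^{(\ba)}}(\widetilde{Z}^{(\ba)};\bp^\ast)]$.

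Finally, use that this last quantity is the expected score of the marginal law of $\widetilde{Z}^{(\ba)}$ evaluated at its own true parameter. By Appendix \ref{os_derive}, this marginal law is $\pi_{\widetilde{Z}^{(\ba)}}(\cdot;\bp^\ast)=\widetilde{\eta}_0^{1-\tilde y^\ast}\widetilde{\phi}_1^{\tilde y^\ast}$. Since $\mc{Z}$ is finite and the probabilities are bounded away from zero by Lemma \ref{lemma1}, we get
\[
\sum_{\tilde z\in\mc{Z}}\pi_{\widetilde{Z}^{(\ba)}}(\tilde z;\bp^\ast)\,\nabla\log\pi_{\widetilde{Z}^{(\ba)}}(\tilde z;\bp^\ast)=\nabla_{\bp}\sum_{\tilde z}\pi_{\widetilde{Z}^{(\ba)}}(\tilde z;\bp)\Big|_{\bp^\ast}=\nabla_{\bp}1=0.
\]
Here we need that $\sum_{\tilde z}\widetilde{\eta}_0^{1-\tilde y^\ast}\widetilde{\phi}_1^{\tilde y^\ast}=1$ identically in $\bp$, not merely at $\bp^\ast$. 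This holds because $\widetilde{\eta}_0(\ba;\bp)+\sum_{\widetilde{\by}}\widetilde{\phi}_1(\widetilde{\by},\ba;\bp)$ is the total mass of the RCT model of Appendix \ref{rct_likelihood}, which is a valid distribution for every $\bp$. Summing over $\ba$ then gives zero.

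The main obstacle is the identification step linking the observed-data expectation to the potential-outcome law. Specifically, one must check that $\pi_{\widetilde{Z}^{(\ba)}}(\cdot;\bp^\ast)$ is really the marginal distribution of $\widetilde{Z}^{(\ba)}$. That requires independence of individuals within a pool, so that $E_{\mb{X}}$ factorizes over $j$ as in Appendix \ref{os_derive}, together with the misclassification independence assumptions stated there. I would make this explicit, and also verify that $P(\bA=\ba|\mb{X})=\prod_j\pi^\ast_{a_j}(\bX_j)$, which again relies on within-pool independence.
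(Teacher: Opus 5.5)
Your proof is correct and follows essentially the same route as the paper. Both arguments reduce to a single pool, cancel the pool-level propensity against the weight, and use group-level consistency and conditional exchangeability to replace the conditional law of $\widetilde{Z}$ given $(\bA,\mb{X})$ by that of $\widetilde{Z}^{(\ba)}$ given $\mb{X}$; they then integrate out $\mb{X}$ to obtain the potential-outcome marginal and finish with $\sum_{\tilde z}\nabla_{\bp}\pi_{\widetilde{Z}^{(\ba)}}(\tilde z;\bp^\ast)=0$. You are slightly more careful than the paper on points it leaves implicit: the normalization must hold identically in $\bp$ (it does, since $\widetilde{\eta}_0+\sum_{\widetilde{\by}}\widetilde{\phi}_1=\eta_0+\eta_1=1$), and both $P(\bA=\ba\mid\mb{X})=\prod_j\pi^\ast_{a_j}(\bX_j)$ and the identification in Appendix~\ref{os_derive} rely on within-pool independence.
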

		
         \begin{proof} Since $\bU(\bp^\ast; \bD) = \frac{1}{M}\sum_{i=1}^M \bU(\bp^\ast; D_i)$, where $\bU(\bp^\ast; D_i) = W(\bA_i, \mb{X}_i) \nabla \ell(\bp^\ast | \widetilde{Z}_i, \bA_i)$, it suffices to show $E[\bU(\bp^\ast; D_i)]=0$ for each $i$-th group data $D_i = (\widetilde{Z}_i, \bA_i, \mb{X}_i)$. 
        \begin{align*}
			&E[\bU(\bp^\ast; D_i)] 
			= \sum_{\ba_i \in \mc{A}^{N_i}} \sum_{\tilde{z}_i \in \mc{Z}_i} \int_{\mb{x}_i \in \mc{X}^{N_i}} W(\ba_i, \mb{x}_i) \nabla \ell(\bp^\ast | \tilde{z}_i , \ba_i) \pi_{\widetilde{Z}_i , \bA_i, \mb{X}_i}(\tilde{z}_i , \ba_i, \mb{x}_i ; \bp^\ast) d\mb{x}_i  \\
			=&\sum_{\ba_i \in \mc{A}^{N_i}} \sum_{\tilde{z}_i \in \mc{Z}_i} \int_{\mb{x}_i \in \mc{X}^{N_i}} \frac{\nabla_{\bp} \pi_{\widetilde{Z}_{i}^{(\ba_i)}}(\tilde{z}_{i} ; \bp^\ast) }{ \pi_{\bA_i | \mb{X}_i}(\ba_i | \mb{x}_i) \pi_{\widetilde{Z}_{i}^{(\ba_i)}}(\tilde{z}_{i} ; \bp^\ast) } \pi_{\widetilde{Z}_i | \bA_i, \mb{X}_i}(\tilde{z}_i | \ba_i, \mb{x}_i ; \bp^\ast) \pi_{\bA_i | \mb{X}_i}(\ba_i | \mb{x}_i) f_{\mb{X}_i}(\mb{x}_i)  d\mb{x}_i \\
			=&\sum_{\ba_i \in \mc{A}^{N_i}} \sum_{\tilde{z}_i \in \mc{Z}_i}  \left( \int_{\mb{x}_i \in \mc{X}^{N_i}}\pi_{\widetilde{Z}_i | \bA_i, \mb{X}_i}(\tilde{z}_i | \ba_i, \mb{x}_i ; \bp^\ast) f_{\mb{X}_i}(\mb{x}_i)  d\mb{x}_i  \right) \frac{\nabla_{\bp} \pi_{\widetilde{Z}_{i}^{(\ba_i)}}(\tilde{z}_{i} ; \bp^\ast) }{ \pi_{\widetilde{Z}_{i}^{(\ba_i)}}(\tilde{z}_{i} ; \bp^\ast) } \\
			=&\sum_{\ba_i \in \mc{A}^{N_i}} \sum_{\tilde{z}_i \in \mc{Z}_i}  \left( \int_{\mb{x}_i \in \mc{X}^{N_i}}\pi_{\widetilde{Z}_{i}^{(\ba_i)} | \bA_i, \mb{X}_i}(\tilde{z}_i | \ba_i, \mb{x}_i ; \bp^\ast) f_{\mb{X}_i}(\mb{x}_i)  d\mb{x}_i  \right) \frac{\nabla_{\bp} \pi_{\widetilde{Z}_{i}^{(\ba_i)}}(\tilde{z}_{i} ; \bp^\ast) }{ \pi_{\widetilde{Z}_{i}^{(\ba_i)}}(\tilde{z}_{i} ; \bp^\ast) } \quad \text{(consistency)} \\
			=&\sum_{\ba_i \in \mc{A}^{N_i}} \sum_{\tilde{z}_i \in \mc{Z}_i}  \left( \int_{\mb{x}_i \in \mc{X}^{N_i}}\pi_{\widetilde{Z}_{i}^{(\ba_i)} |  \mb{X}_i}(\tilde{z}_i | \mb{x}_i ; \bp^\ast) f_{\mb{X}_i}(\mb{x}_i)  d\mb{x}_i  \right) \frac{\nabla_{\bp} \pi_{\widetilde{Z}_{i}^{(\ba_i)}}(\tilde{z}_{i} ; \bp^\ast) }{ \pi_{\widetilde{Z}_{i}^{(\ba_i)}}(\tilde{z}_{i} ; \bp^\ast) } \quad \text{(exchangeability)} \\
			=&\sum_{\ba_i \in \mc{A}^{N_i}} \sum_{\tilde{z}_i \in \mc{Z}_i}   \pi_{\widetilde{Z}_{i}^{(\ba_i)}}(\tilde{z}_{i} ; \bp^\ast) \frac{\nabla_{\bp} \pi_{\widetilde{Z}_{i}^{(\ba_i)}}(\tilde{z}_{i} ; \bp^\ast) }{ \pi_{\widetilde{Z}_{i}^{(\ba_i)}}(\tilde{z}_{i} ; \bp^\ast) } = \sum_{\ba_i \in \mc{A}^{N_i}} \nabla_{\bp} \left (\sum_{\tilde{z}_i \in \mc{Z}_i}   \pi_{\widetilde{Z}_{i}^{(\ba_i)}}(\tilde{z}_{i} ; \bp^\ast) \right) = 0 
		\end{align*}

         \end{proof}

		\begin{lemma}\label{lemma4}
			Under Conditions (A\ref{Aex}) -- (A\ref{Apd}) and the Rubin Causal Model assumptions (RCM1) -- (RCM2),
			$-E[\dot{\bU}(\bp^\ast; \bD)]$ and $E[\bU(\bp^\ast; \bD)\bU(\bp^\ast; \bD)^{\rT}]$ are positive definite.
		\end{lemma}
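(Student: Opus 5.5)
The plan is to reduce both matrices to per-pool quantities and then exploit the IPW identity already used in the proof of Lemma \ref{lemma3}. Under the uniform group size simplification, the pools are i.i.d., and $\bU(\bp^\ast;\bD)=M^{-1}\sum_i \bU(\bp^\ast;D_i)$ with $E[\bU(\bp^\ast;D_i)]=0$ by Lemma \ref{lemma3}. Hence
\[
E[\bU\bU^{\rT}]=M^{-1}E[\bU(\bp^\ast;D_1)\bU(\bp^\ast;D_1)^{\rT}],\qquad E[\dot\bU]=E[\dot\bU(\bp^\ast;D_1)].
\]
It therefore suffices to treat a single pool $D=(\widetilde Z,\bA,\mb X)$.

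\textbf{Step 1 (Hessian term).} For the negative Hessian, I would repeat the computation in the proof of Lemma \ref{lemma3}. First write $W=1/\pi_{\bA|\mb X}$. Then use consistency and exchangeability to integrate out $\mb X$; this turns the observed conditional density into $\pi_{\widetilde Z^{(\ba)}}$. The result is
\[
-E[\dot\bU(\bp^\ast;D)]=-\sum_{\ba}\sum_{\tilde z}\pi_{\widetilde Z^{(\ba)}}(\tilde z;\bp^\ast)\,\nabla^2\log\pi_{\widetilde Z^{(\ba)}}(\tilde z;\bp^\ast).
\]
Applying the usual identity $\nabla^2\log\pi=\nabla^2\pi/\pi-\nabla\log\pi\,\nabla\log\pi^{\rT}$, and noting that $\sum_{\tilde z}\nabla^2\pi_{\widetilde Z^{(\ba)}}(\tilde z;\bp)=0$ (the support $\mc Z$ is finite and the densities are polynomials in $\bp$ by Lemma \ref{lemma1}), this becomes
\[
\sum_{\ba}\sum_{\tilde z}\frac{\nabla\pi_{\widetilde Z^{(\ba)}}(\tilde z;\bp^\ast)\,\nabla\pi_{\widetilde Z^{(\ba)}}(\tilde z;\bp^\ast)^{\rT}}{\pi_{\widetilde Z^{(\ba)}}(\tilde z;\bp^\ast)},
\]
which is a sum of the Fisher informations of the potential-outcome models. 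Each denominator is bounded and positive by Lemma \ref{lemma1}. So for any $v\neq0$, the quadratic form is a sum of nonnegative terms $(v^{\rT}\nabla\pi)^2/\pi$. It vanishes only if $v$ is orthogonal to every gradient in $\bG_{\bp^\ast}$, and this is excluded by Condition (A\ref{Apd}).

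\textbf{Step 2 (outer product term).} Here one factor of $W$ survives: $W^2\pi_{\bA|\mb X}=W$. Integrating out $\mb X$ as in Step 1 gives
\[
E[\bU\bU^{\rT}]=M^{-1}\sum_{\ba}\sum_{\tilde z}E_{\mb X}\!\left[\pi_{\widetilde Z|\bA=\ba,\mb X}(\tilde z)\,W(\ba,\mb X)\right]\frac{\nabla\pi\,\nabla\pi^{\rT}}{\pi^2}.
\]
The weights $E_{\mb X}[\cdot]$ are no longer exactly $\pi_{\widetilde Z^{(\ba)}}$, but they are bounded below: $W>1$ by the proof of Lemma \ref{lemma1}, so each weight exceeds $E_{\mb X}[\pi_{\widetilde Z|\ba,\mb X}(\tilde z)]=\pi_{\widetilde Z^{(\ba)}}(\tilde z)$. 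This gives the matrix inequality
\[
M\,E[\bU\bU^{\rT}]\succeq -E[\dot\bU(\bp^\ast;D)],
\]
and positive definiteness follows from Step 1.

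\textbf{Main obstacle.} The delicate part is justifying the interchange that moves the conditional density $\pi_{\widetilde Z|\bA,\mb X}$ to the potential-outcome density inside the $\mb X$-integral. This requires the measurement-error independence assumptions of Appendix \ref{os_derive} together with exchangeability, exactly as in Lemma \ref{lemma3}; I would cite that derivation rather than redo it. I would also make sure that positivity (A\ref{Aex}) keeps all weights finite, so that the dominated differentiation $\sum_{\tilde z}\nabla^2\pi=0$ is legitimate on the finite set $\mc Z$.
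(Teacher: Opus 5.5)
Your proof is correct. Step 1 is the paper's argument. Step 2 takes a genuinely different, if modest, route to the second matrix.

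\textbf{Step 1.} As in the paper, you integrate out $\mb{X}$ via consistency and exchangeability, as in Lemma~\ref{lemma3}. You then drop $\sum_{\tilde z}\nabla^2\pi_{\widetilde{Z}^{(\ba)}}(\tilde z;\bp^\ast)$ and apply (A\ref{Apd}) to the resulting positively weighted sum of outer products. You state explicitly that the second-derivative sum vanishes; the paper uses this without comment.

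\textbf{Step 2.} The paper writes $E[\bU(\bp^\ast;D_i)\bU(\bp^\ast;D_i)^{\rT}]$ as a sum of the same outer products, with weights $\gamma/\pi_{\widetilde{Z}^{(\ba)}}(\tilde z;\bp^\ast)^2$, where $\gamma=E_{\mb{X}}[\pi_{\widetilde{Z}^{(\ba)}|\mb{X}}(\tilde z|\mb{X})W(\ba,\mb{X})]$. It notes $\gamma<\infty$ by positivity and applies (A\ref{Apd}) a second time. You instead use $W\ge 1$ to get $\gamma\ge\pi_{\widetilde{Z}^{(\ba)}}(\tilde z;\bp^\ast)$. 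This gives the Loewner bound $E[\bU(\bp^\ast;D_1)\bU(\bp^\ast;D_1)^{\rT}]\succeq -E[\dot{\bU}(\bp^\ast;D_1)]$, which reduces the second claim to the first.

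\textbf{What your route buys.} First, it supplies the strict lower bound $\gamma>0$. The paper's ``positively weighted'' step needs this, but the paper only bounds $\gamma$ from above. Second, it is quantitative rather than mere nonsingularity. The smallest eigenvalue of $\bUS$ is at least that of $-E[\dot{\bU}(\bp^\ast)]$. Hence the sandwich variance of Theorem~\ref{thm1} satisfies $\bV\bUS\bV^{\rT}\succeq\bV$, so inverse weighting can only inflate variance relative to the potential-outcome information.

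\textbf{What the paper's route buys.} It handles the second matrix on its own, without going through Step 1. Because it sums per-pool positive definite matrices, it also does not need identically distributed pools. Your reduction to $D_1$ uses the uniform-group-size convention. That is harmless, since the paper adopts it in the proof of Theorem~\ref{thm1}, and your comparison carries over unchanged by summing over $i$.
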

		
         \begin{proof} It suffices to show $-E[\dot{\bU}(\bp^\ast; D_i)]$ is positive definite as positive sums of positive definite matrices are positive definite. So a similar derivation as Lemma \ref{lemma3} for
        \begin{align*}
            &\dot{\bU}(\bp^\ast; D_i) = W(\bA_i, \mb{X}_i) \sum_{\ba_i \in \mc{A}^{N_i}} 1\{\bA_i = \ba_i\} \nabla^2 \ell(\bp^\ast | \widetilde{Z}_i, \bA_i) \\
             = &W(\bA_i, \mb{X}_i) \sum_{\ba_i \in \mc{A}^{N_i}} 1\{\bA_i = \ba_i\} \Bigg( \frac{\nabla_{\bp}^2 \pi_{\widetilde{Z}_{i}^{(\ba_i)}}(\widetilde{Z}_{i} ; \bp^\ast) }{\pi_{\widetilde{Z}_{i}^{(\ba_i)}}(\widetilde{Z}_{i} ; \bp^\ast)} 
            - \frac{\nabla_{\bp} \pi_{\widetilde{Z}_{i}^{(\ba_i)}}(\widetilde{Z}_{i} ; \bp^\ast) \nabla_{\bp}^{\rT} \pi_{\widetilde{Z}_{i}^{(\ba_i)}}(\widetilde{Z}_{i} ; \bp^\ast)}{ \pi_{\widetilde{Z}_{i}^{(\ba_i)}}(\widetilde{Z}_{i} ; \bp^\ast)^2} \Bigg)
        \end{align*} produces
        \begin{align*}
            -E[\dot{\bU}(\bp^\ast; D_i)] = &\sum_{\ba_i \in \mc{A}^{N_i}} \sum_{\tilde{z}_i \in \mc{Z}_i} \left(\frac{1}{ \pi_{\widetilde{Z}_{i}^{(\ba_i)}}(\tilde{z}_{i} ; \bp^\ast)} \right) \nabla_{\bp} \pi_{\widetilde{Z}_{i}^{(\ba_i)}}(\tilde{z}_{i} ; \bp^\ast) \nabla_{\bp}^{\rT} \pi_{\widetilde{Z}_{i}^{(\ba_i)}}(\tilde{z}_{i} ; \bp^\ast) \\
            = &\sum_{\ba_i \in \mc{A}^{N_i}} \sum_{\tilde{z}_i \in \mc{Z}_i}  (\kappa_{\tilde{z}_{i}, \ba_i, \bp^\ast})  \nabla_{\bp} \pi_{\widetilde{Z}_{i}^{(\ba_i)}}(\tilde{z}_{i} ; \bp^\ast) \nabla_{\bp}^{\rT} \pi_{\widetilde{Z}_{i}^{(\ba_i)}}(\tilde{z}_{i} ; \bp^\ast).
        \end{align*} This is evidently a positively weighted sum of vector outer products, which is positive definite if and only if those vectors span their inhabitant space viz Condition (A\ref{Apd}).
        
        Next, note first that
        \begin{align*}
            \bU(\bp^\ast; \bD)\bU(\bp^\ast; \bD)^{\rT} = \frac{1}{M^2} \left( \sum_{i = 1}^M \bU(\bp^\ast; D_i)\bU(\bp^\ast; D_i)^{\rT}   
            +  \sum_{j, k = 1; j \neq k}^M \bU(\bp^\ast; D_j)\bU(\bp^\ast; D_k)^{\rT} \right)
        \end{align*} Since the data $D_j$ is independent of $D_k$ if $j \neq k$, the expectations of the cross terms split and are therefore zero by Lemma \ref{lemma3}. Now, we can follow a similar derivation as Lemma \ref{lemma3} for the first term to get
        \begin{align*}
            E[\bU(\bp^\ast; D_i)\bU(\bp^\ast; D_i)^{\rT}] = &\sum_{\ba_i \in \mc{A}^{N_i}} \sum_{\tilde{z}_i \in \mc{Z}_i} \left(\frac{1}{ \pi_{\widetilde{Z}_{i}^{(\ba_i)}}(\tilde{z}_{i} ; \bp^\ast)} \right)^2 \left( \int_{\mb{x}_i \in \mc{X}^{N_i}} \frac{\pi_{\widetilde{Z}_i^{(\ba_i)} | \mb{X}_i}(\tilde{z}_i | \mb{x}_i ; \bp^\ast)}{\pi_{\bA_i | \mb{X}_i}(\ba_i | \mb{x}_i)} f_{\mb{X}_i}(\mb{x}_i) d \mb{x}_i \right) \\
            &\h2s\h2s\h2s\h2s \times\nabla_{\bp} \pi_{\widetilde{Z}_{i}^{(\ba_i)}}(\tilde{z}_{i} ; \bp^\ast) \nabla_{\bp}^{\rT} \pi_{\widetilde{Z}_{i}^{(\ba_i)}}(\tilde{z}_{i} ; \bp^\ast) \\
            =&\sum_{\ba_i \in \mc{A}^{N_i}} \sum_{\tilde{z}_i \in \mc{Z}_i}  (\kappa_{\tilde{z}_{i}, \ba_i, \bp^\ast})^2 (\gamma_{\tilde{z}_{i}, \ba_i, \bp^\ast})  \nabla_{\bp} \pi_{\widetilde{Z}_{i}^{(\ba_i)}}(\tilde{z}_{i} ; \bp^\ast) \nabla_{\bp}^{\rT} \pi_{\widetilde{Z}_{i}^{(\ba_i)}}(\tilde{z}_{i} ; \bp^\ast).
        \end{align*} This is again a positively weighted sum of vector outer products. Note that the weights are finite because
        $
        \gamma_{\tilde{z}_{i}, \ba_i, \bp^\ast}$ $=$ $\int_{\mb{x}_i \in \mc{X}^{N_i}} \frac{\pi_{\widetilde{Z}_i^{(\ba_i)} | \mb{X}_i}(\tilde{z}_i | \mb{x}_i ; \bp^\ast)}{\pi_{\bA_i | \mb{X}_i}(\ba_i | \mb{x}_i)} f_{\mb{X}_i}(\mb{x}_i) d \mb{x}_i$ $< \infty
        $ as $\pi_{\bA_i | \mb{X}_i}(\ba_i | \mb{x}_i)$ is uniformly bounded away from $0$ by Condition (A\ref{Aex}). Hence, Condition (A\ref{Apd}) guarantees positive definiteness of the expression analogously.

        \end{proof}

        \begin{coro}\label{corollary4}
			Under Conditions (A\ref{Aex}) -- (A\ref{Apd}) and Rubin Causal Model assumptions (RCM1) -- (RCM2), there exists a constant $R_+ > 0$, independent of $M$, such that
			$-E[\dot{\bU}(\bp; \bD)]$ and $E[\bU(\bp; \bD)\bU(\bp; \bD)^{\rT}]$ are positive definite for all $\bp \in B_{R_+}(\bp^\ast) \cap [\nu, 1 - \nu]^2$.
		\end{coro}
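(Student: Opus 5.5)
Corollary \ref{corollary4} follows from Lemma \ref{lemma4} by continuity in $\bp$ together with a compactness argument; nothing new about the data is needed. The plan is to write both matrices as expectations over a single pool. Because the pools are i.i.d. (uniform group size, as in the proof of Theorem \ref{thm1}), we have $E[\dot{\bU}(\bp;\bD)] = E[\dot{\bU}(\bp;D_1)]$. Also
\begin{align*}
E[\bU(\bp;\bD)\bU(\bp;\bD)^{\rT}] = M^{-1}E[\bU(\bp;D_1)\bU(\bp;D_1)^{\rT}] + (1-M^{-1})E[\bU(\bp;D_1)]E[\bU(\bp;D_1)]^{\rT}.
\end{align*}
The second term is positive semidefinite. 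It therefore suffices to prove positive definiteness of $H(\bp):=-E[\dot{\bU}(\bp;D_1)]$ and $J(\bp):=E[\bU(\bp;D_1)\bU(\bp;D_1)^{\rT}]$, neither of which depends on $M$. This is exactly what makes $R_+$ independent of $M$: multiplying by $M^{-1}>0$ does not affect definiteness.

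The key step is continuity of $H$ and $J$ on $[\nu,1-\nu]^2$. By Lemma \ref{lemma1}, for each fixed $(\tilde z,\ba)$ the map $\bp\mapsto \ell(\bp|\tilde z,\ba)$ is smooth. Its first two derivatives are rational functions whose denominators are bounded away from $0$ uniformly in $\bp$, $\tilde z$ and $\ba$, and $W$ is bounded by $\nu^{-N}$. Hence the integrands $W\,\nabla^2\ell(\bp|\cdot)$ and $W^2\nabla\ell\nabla\ell^{\rT}$ are continuous in $\bp$ and dominated by a constant. Dominated convergence then gives continuity of $H(\bp)$ and $J(\bp)$. Since $\mc{Z}\times\mc{A}^N$ is finite, one can even write the expectation as a finite sum over $(\tilde z,\ba)$ of continuous functions of $\bp$ times $\bp$-free integrals over $\mb{x}$.

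Next, take the smallest-eigenvalue function $\lambda_{\min}(\cdot)$. It is continuous in the matrix entries (Weyl's inequality), so $\bp\mapsto\lambda_{\min}(H(\bp))$ and $\bp\mapsto\lambda_{\min}(J(\bp))$ are continuous. By Lemma \ref{lemma4}, both are strictly positive at $\bp^\ast$; call the smaller value $2\Lambda$. Continuity then gives some $R_+>0$ such that both eigenvalues exceed $\Lambda$ on $B_{R_+}(\bp^\ast)\cap[\nu,1-\nu]^2$. This is the uniform eigenvalue bound later invoked in the proofs of Theorems \ref{thm1} and \ref{cor1}. Uniformity of the modulus of continuity is automatic, since the domain $[\nu,1-\nu]^2$ is compact.

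The main obstacle is the dominated-convergence/continuity step. One must make sure that the Hessian of the log-likelihood, and not just the score, is uniformly bounded. This follows from Lemma \ref{lemma1}'s uniform lower bounds on $\widetilde\eta_0$ and $\widetilde\phi_1$. It also uses the fact that their derivatives in $\bp$ are polynomial in $\bp$ with coefficients bounded by $N$-dependent constants, which is controlled by Condition (A\ref{Agp}).
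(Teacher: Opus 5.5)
Your proof is correct and follows the same route as the paper. Both arguments establish continuity in $\bp$ of the single-pool matrices $-E[\dot{\bU}(\bp;D_1)]$ and $E[\bU(\bp;D_1)\bU(\bp;D_1)^{\rT}]$ via Lemma~\ref{lemma1}, take positive definiteness at $\bp^\ast$ from Lemma~\ref{lemma4}, and then transfer the result to the $\bD$-level matrices. Your version is slightly tighter than the paper's in two places. First, working with $\lambda_{\min}$ rather than determinants avoids the implicit connectedness step the paper needs to pass from $\det>0$ to definiteness. Second, you correctly keep the positive semidefinite cross term $(1-M^{-1})E[\bU(\bp;D_1)]E[\bU(\bp;D_1)]^{\rT}$, which the paper's identity $E[\bU(\bp;\bD)\bU(\bp;\bD)^{\rT}]=M^{-1}E[\bU(\bp;D_1)\bU(\bp;D_1)^{\rT}]$ drops even though that term vanishes only at $\bp=\bp^\ast$.
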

        
       \begin{proof} By Lemma \ref{lemma1}  $E[\dot{\bU}(\bp; D_1)], E[\bU(\bp; D_1)\bU(\bp; D_1)^{\rT}]$ are continuous on $[\nu, 1 - \nu]^2$. Thus as $\text{det}\{-E[\dot{\bU}(\bp^\ast; D_1)]\} > 0$ and $\text{det}\{E[\bU(\bp^\ast; D_1)\bU(\bp^\ast; D_1)^{\rT}]\} > 0$ by Lemma \ref{lemma4}, there exists a ball $B_{R_+}(\bp^\ast)$ of some radius $R_+ > 0$ such that $\text{det}\{-E[\dot{\bU}(\bp; D_1)]\} > 0$ and $\text{det}\{E[\bU(\bp; D_1)\bU(\bp; D_1)^{\rT}]\} > 0$ for all $\bp \in B_{R_+}(\bp^\ast) \cap [\nu, 1 - \nu]^2$. Now note that $E[\dot{\bU}(\bp; \bD)] = E[\dot{\bU}(\bp; D_1)]$ and $E[\bU(\bp; \bD)\bU(\bp; \bD)^{\rT}] = \frac{1}{M}E[\bU(\bp; D_1)\bU(\bp; D_1)^{\rT}]$. 

       \end{proof}

\end{appendices}

\end{document}